\renewcommand\section{\@startsection{section}{1}{\z@}%
                                  {-2.0ex \@plus -1ex \@minus -.2ex}%
                                  {2.0ex \@plus.2ex}%
                                  {\normalfont\normalsize\bfseries}}
\newcommand{\DFA}{{\rm DFA }}
\newcommand{\NDFA}{{\rm NDFA }}
\newcommand{\CFG}{{\rm CFG }}
\newcommand{\CSG}{{\rm CSG }}
\newcommand{\DPDA}{{\rm DPDA }}
\newcommand{\PDA}{{\rm PDA }}
\newcommand{\LBA}{{\rm LBA }}
\newcommand{\DTIME}{{\rm DTIME } }
\newcommand{\NSPACE}{{\rm NSPACE } }
\newcommand{\Kleenestar}{{\textstyle *}}
\newcommand{\ang}[1]{\langle#1\rangle}
\newcommand{\goes}{\rightarrow}
\newcommand{\kstar}{{\textstyle *}}
\newcommand{\nat}{{\sf N}}
\newcommand{\xvec}[1]{\ifcase 3{#1} {\ang {x_1,x_2,x_3} } \else 
\ifcase 4{#1} {\ang{x_1,x_2,x_3,x_4}} \else {\ang {x_1,\ldots,x_{#1}}}\fi\fi}
\newcommand{\yvec}[1]{\ifcase 3{#1} {\ang {y_1,y_2,y_3} } \else 
\ifcase 4{#1} {\ang{y_1,y_2,y_3,y_4}} \else {\ang {y_1,\ldots,y_{#1}}}\fi\fi}
\newcommand{\zvec}[1]{\ifcase 3{#1} {\ang {z_1,z_2,z_3} } \else 
\ifcase 4{#1} {\ang{z_1,z_2,z_3,z_4}} \else {\ang {z_1,\ldots,z_{#1}}}\fi\fi}
\newcommand{\vecc}[2]{\ifcase 3{#2} {\ang { {#1}_1,{#1}_2,{#1}_3 } } \else
\ifcase 4{#1} {\ang { {#1}_1,{#1}_2,{#1}_3,{#1}_{4} } }
\else {\ang { {#1}_1,\ldots,{#1}_{#2}}}\fi\fi}
\newcommand{\veccd}[3]{\ifcase 3{#2} {\ang { {#1}_{{#3}1},{#1}_{{#3}2},{#1}_{{#3}3} } } \else
\ifcase 4{#1} {\ang { {#1}_{{#3}1},{#1}_{{#3}2},{#1}_{#3}3},{#1}_{{#3}4} }
\else {\ang { {#1}_{{#3}1},\ldots,{#1}_{{#3}{#2}}}}\fi\fi}
\newcommand{\veccz}[2]{\ifcase 3{#2} {\ang { {#1}_0,{#1}_2,{#1}_3 } } \else
\ifcase 4{#1} {\ang { {#1}_0,{#1}_2,{#1}_3,{#1}_{4} } }
\else {\ang { {#1}_0,\ldots,{#1}_{#2}}}\fi\fi}
\newcommand{\xve}[1]{\ifcase 3{#1} {x_1,x_2,x_3} \else 
\ifcase 4{#1} {x_1,x_2,x_3,x_4} \else {x_1,\ldots,x_{#1}}\fi\fi}
\newcommand{\yve}[1]{\ifcase 3{#1} {y_1,y_2,y_3} \else 
\ifcase 4{#1} {y_1,y_2,y_3,y_4} \else {y_1,\ldots,y_{#1}}\fi\fi}
\newcommand{\zve}[1]{\ifcase 3{#1} {z_1,z_2,z_3} \else 
\ifcase 4{#1} {z_1,z_2,z_3,z_4} \else {z_1,\ldots,z_{#1}}\fi\fi}
\newcommand{\ve}[2]{\ifcase 3#2 {{#1}_1,{#1}_2,{#1}_3} \else
\ifcase 4#2 {{#1}_1,{#1}_2,{#1}_3,{#1}_{4}}
\else {{#1}_1,\ldots,{#1}_{#2}}\fi\fi}
\newcommand{\ved}[3]{\ifcase 3#2 {{#1}_{{#3}1},{#1}_{{#3}2},{#1}_{{#3}3}} \else
\ifcase 4#2 {{#1}_{{#3}1},{#1}_{{#3}2},{#1}_{{#3}3},{#1}_{{#3}4}}
\else {{#1}_{{#3}1},\ldots,{#1}_{{#3}{#2}}}\fi\fi}
\newcommand{\fuve}[3]{
\ifcase 3#2
{{#3}({#1}_1),{#3}({#1}_2,{#3}({#1}_3)} \else
\ifcase 4#2
{{#3}({#1}_1),{#3}({#1}_2),{#3}({#1}_3),{#3}({#1}_4)}
\else
{{#3}({#1}_1),\ldots,{#3}({#1}_{#2})}\fi\fi}
\newcommand{\setmathchar}[1]{\ifmmode#1\else$#1$\fi}
\newcommand{\vlist}[2]{%
	\setmathchar{%
		\compound#2\one{#2}\two
		\ifcompound
			({#1}_1,\ldots,{#1}_{#2})
		\else
			\ifcat N#2
				({#1}_1,\ldots,{#1}_{#2})
			\else
				\ifcase#2
					({#1}_0)\or
					({#1}_1)\or
					({#1}_1,{#1}_2)\or 
					({#1}_1,{#1}_2,{#1}_3)\or
					({#1}_1,{#1}_2,{#1}_3,{#1}_4)\else 
					({#1}_1,\ldots,{#1}_{#2})
				\fi
			\fi
		\fi}}
\newif\ifcompound
\def\compound#1\one#2\two{%
	\def\one{#1}
	\def\two{#2}
	\if\one\two
		\compoundfalse
	\else
		\compoundtrue
	\fi}
\newcommand{\xwe}[1]{\ifcase 3{#1} {x_1\wedge x_2\wedge x_3} \else 
\ifcase 4{#1} {x_1\wedge x_2\wedge x_3\wedge x_4} \else {x_1\wedge \cdots \wedge
x_{#1}}\fi\fi}
\newcommand{\we}[2]{\ifcase 3#2 {\ang { {#1}_1\wedge {#1}_2\wedge {#1}_3 } } \else
\ifcase 4{#1} {\ang { {#1}_1\wedge {#1}_2\wedge {#1}_3\wedge {#1}_{4} } }
\else {\ang { {#1}_1\wedge \cdots\wedge {#1}_{#2}}}\fi\fi}
\newcommand{\st}{\mathrel{:}}
\newcommand{\es}{\emptyset}
\newcommand{\TLE}{\le_{\rm T}}
\newcommand{\TL}{<_{\rm T}}
\newcommand{\s}[1]{\s_{#1}}
\newcommand{\monus}{\;\raise.5ex\hbox{{${\buildrel
    \ldotp\over{\hbox to 6pt{\hrulefill}}}$}}\;}
\newcommand{\infinity}{\infty}
\newcounter{savenumi}
\newtheorem{theoremfoo}{Theorem}[section] 
\newenvironment{theorem}{\pagebreak[1]\begin{theoremfoo}}{\end{theoremfoo}}
\newtheorem{lemmafoo}[theoremfoo]{Lemma}
\newenvironment{lemma}{\pagebreak[1]\begin{lemmafoo}}{\end{lemmafoo}}
\newtheorem{conjecturefoo}[theoremfoo]{Conjecture}
\newtheorem{conventionfoo}[theoremfoo]{Convention}
\newenvironment{convention}{\pagebreak[1]\begin{conventionfoo}\rm}{\end{conventionfoo}}
\newtheorem{porismfoo}[theoremfoo]{Porism}
\newtheorem{gamefoo}[theoremfoo]{Game}
\newtheorem{corollaryfoo}[theoremfoo]{Corollary}
\newenvironment{corollary}{\pagebreak[1]\begin{corollaryfoo}}{\end{corollaryfoo}}
\newtheorem{openfoo}[theoremfoo]{Open Problem}
\newtheorem{exercisefoo}{Exercise}
\newcommand{\fig}[1] 
{
 \begin{figure}
 \begin{center}
 \input{#1}
 \end{center}
 \end{figure}
}
\newtheorem{potanafoo}[theoremfoo]{Potential Analogue}
\newtheorem{notefoo}[theoremfoo]{Note}
\newenvironment{note}{\pagebreak[1]\begin{notefoo}\rm}{\end{notefoo}}
\newtheorem{notabenefoo}[theoremfoo]{Nota Bene}
\newtheorem{nttn}[theoremfoo]{Notation}
\newenvironment{notation}{\pagebreak[1]\begin{nttn}\rm}{\end{nttn}}
\newtheorem{empttn}[theoremfoo]{Empirical Note}
\newtheorem{examfoo}[theoremfoo]{Example}
\newenvironment{example}{\pagebreak[1]\begin{examfoo}\rm}{\end{examfoo}}
\newtheorem{dfntn}[theoremfoo]{Def}
\newenvironment{definition}{\pagebreak[1]\begin{dfntn}\rm}{\end{dfntn}}
\newtheorem{propositionfoo}[theoremfoo]{Proposition}
\newenvironment{proof}
    {\pagebreak[1]{\narrower\noindent {\bf Proof:\quad\nopagebreak}}}{\QED}
\newcommand{\yyskip}{\penalty-50\vskip 5pt plus 3pt minus 2pt}
\newcommand{\blackslug}{\hbox{\hskip 1pt
        \vrule width 4pt height 8pt depth 1.5pt\hskip 1pt}}
\newcommand{\QED}{{\penalty10000\parindent 0pt\penalty10000
        \hskip 8 pt\nolinebreak\blackslug\hfill\lower 8.5pt\null}
        \par\yyskip\pagebreak[1]}
\newcommand{\BBB}{{\penalty10000\parindent 0pt\penalty10000
        \hskip 8 pt\nolinebreak\hbox{\ }\hfill\lower 8.5pt\null}
        \par\yyskip\pagebreak[1]}
\newtheorem{factfoo}[theoremfoo]{Fact}
\newenvironment{fact}{\pagebreak[1]\begin{factfoo}}{\end{factfoo}}
\newenvironment{block}{\begin{list}{\hbox{}}{\leftmargin 1em
    \itemindent -1em \topsep 0pt \itemsep 0pt \partopsep 0pt}}{\end{list}}
\begin{document}

\newcommand{\KN}{K_{\nat}}
\newcommand{\NRE}{\hbox{NUM-RED-EDGES\ }}
\newcommand{\NBE}{\hbox{NUM-BLUE-EDGES\ }}
\newcommand{\RED}{\hbox{RED\ }}
\newcommand{\BLUE}{\hbox{BLUE\ }}
\newcommand{\REDns}{\hbox{RED}}
\newcommand{\BLUEns}{\hbox{BLUE}}

\centerline{\bf On the Sizes of DPDAs, PDAs, LBAs}
\centerline{by Richard Beigel and William Gasarch}

\begin{abstract}
There are languages $A$ such that there is a Pushdown Automata (PDA) that
recognizes $A$ which is much smaller 
than any Deterministic Pushdown Automata (DPDA) 
that recognizes $A$.
There are languages $A$ such that there is a Linear Bounded Automata
(Linear Space Turing Machine, henceforth LBA) that
recognizes $A$ which is much smaller than any PDA that recognizes $A$.
There are languages $A$ such that both $A$ and $\overline{A}$ are recognizable
by a PDA, but the PDA for $A$ is much smaller than the PDA for $\overline{A}$.
There are languages $A_1,A_2$ such that $A_1,A_2,A_1\cap A_2$ are recognizable
by a PDA, but the PDA for $A_1$ and $A_2$ are much smaller than the PDA for $A_1\cap A_2$.
We investigate 
these phenomena and show that, in all these cases,
the size difference is captured by a function whose
Turing degree is on the second level of the arithmetic hierarchy.

Our theorems lead to 
infinitely-many-$n$ results. For example: for-infinitely-many-$n$ there exists
a language $A_n$ recognized by a DPDA
such that there is a small PDA for $A_n$, but any DPDA for $A_n$ is very large.
We look at cases where we can get all-but-a-finite-number-of-$n$ results, though with much
smaller size differences.
\end{abstract}

\noindent
{\bf Keywords:}
Pushdown Automata; Context Free Languages; Linear Bounded Automata;
length of description of languages

\section{Introduction}

Let DPDA be the set of Deterministic Push Down Automata,
PDA be the set of Push Down Automata, and
LBA be the set of Linear Bounded Automata (usually called nondeterministic
linear-space bounded Turing Machines).
Let $L(\DPDA)$ be the set of languages recognized
by DPDAs (similar for $L(\PDA)$ and $L(\LBA)$).
It is well known that

$$L(\DPDA) \subset L(\PDA)\subset L(\LBA).$$

Our concern is with the {\it size} of the DPDA, PDA, LBA.
For example, let $A\in L(\DPDA)$. Is it possible that there is a PDA for $A$
that is much smaller than any DPDA for $A$? For all adjacent pairs above we
will consider these questions. 
There have been related results by
Valiant~\cite{valdpda},
Schmidt~\cite{Schmidt},
Meyer and Fischer~\cite{ecodesc},
Hartmanis~\cite{hartlang}, and Hay~\cite{haylang}.
We give more details on their results later.

Throughout the paper $\Sigma$ is a finite alphabet and \$
is a symbol that is not in $\Sigma$. All of our languages will either
be subsets of $\Sigma^\kstar$ or $(\Sigma\cup \{\$\})^\kstar$.

\begin{convention}
A {\it device} will either be a recognizer (e.g., a DFA) or
a generator (e.g., a regular expression).
We will use $\cal M$ to denote a set of devices (e.g., DFAs).
We will refer to an element of $\cal M$ as an $\cal M$-device.
If $P$ is an $\cal M$-device then let $L(P)$ be the language recognized
or generated by $P$.
Let $L({\cal M})=\{L(P) \st P\in {\cal M}\}$.
\end{convention}

\begin{definition}\label{de:devices}
Let $\cal M$ and $\cal M'$ be two sets of devices
such that $L({\cal M})\subseteq L({\cal M'})$.
(e.g., DFAs and DPDAs).
A {\it bounding function for $({\cal M},{\cal M'})$} is a function $f$ such that
for all $A \in L({\cal M})$, if $A\in L({\cal M'})$ via a device of 
size $n$
then 
$A \in L({\cal M})$ via a device of size $\le f(n)$.
\end{definition}

\begin{definition}~
\begin{enumerate}
\item
The {\it size of a DFA or NDFA} is its number of states.
\item
The {\it size of a DPDA or PDA} is the sum of its number of states and
its number of symbols in the stack alphabet.
\item
The {\it size of a CFG or CSL} is its number of nonterminals.
\item
The {\it size of an LBA} is the sum of its number of states and 
its number of symbols in the alphabet (note that the alphabet used
by the Turing machine may be bigger than the input alphabet).
\end{enumerate}
\end{definition}

We now give some examples and known results.

\begin{example}\label{ex:bd}~
{\bf Known Upper Bounds:}
\begin{enumerate}
\item
$f(n)=2^n$ is a bounding function for (DFA,NDFA)
by the standard proof that $L(\NDFA)\subseteq L(\DFA)$.
\item
$f(n)=n^{{n^{n^{O(n)}}}}$ is a bounding function for (DFA,DPDA).
This is a sophisticated construction by Stearns~\cite{regpda}.
\item
$f(n)=2^{{2^{O(n)}}}$ is a bounding function for (DFA,DPDA).
This is a sophisticated construction by Valiant~\cite{valreg}.
Note that this is a strict improvement over the construction of Stearns.
\item
$f(n)=O(n^{O(1)})$ is a bounding function for (CFG,PDA).
This can be obtained by an inspection of the proof that
$L(\PDA)\subseteq L(\CFG)$.
\item
$f(n)=O(n)$ is a bounding function for (PDA,CFG).
This can be obtained by an inspection of the proof that
$L(\CFG)\subseteq L(\PDA)$.
\item
$f(n)=O(n)$ is a bounding function for (CSG,LBA).
This can be obtained by an inspection of the proof that
$L(\LBA)\subseteq L(\CSG)$.
\item
$f(n)=O(n)$ is a bounding function for (LBA,CSG).
This can be obtained by an inspection of the proof that
$L(\CSG)\subseteq L(\LBA)$.
\end{enumerate}
\end{example}

\noindent
\begin{example}\label{ex:lower}
{\bf Known Lower Bounds:}
\begin{enumerate}
\item
Meyer and Fischer~\cite{ecodesc} 
proved that 
(1) If $f$ is the bounding function for (DFA,NDFA) then $2^n\le f(n)$.
(2) If $f$ is the bounding function for (DFA,DPDA) then $2^{2^{O(n)}}\le f(n)$.
(3) If $f$ is the bounding function for (DFA,CFG) then $HALT\TLE f$.
The sets they used for (3) were finite.
\item
Let UCFG be the set all unambiguous context free grammars.
Valiant~\cite{valdpda} showed that 
if $f$ is the bounding function for (DPDA,UCFG) then $HALT\TLE f$.
\item
Schmidt~\cite{Schmidt}
showed that 
if $f$ is the bounding function for (UCFG,CFG) then $HALT\TLE f$.
\item
Hartmanis~\cite{hartlang} showed that
if $f$ is the bounding function for (DPDA,PDA) then $HALT\TLE f$.
\item
Harel and Hirst (\cite{hhpda}, 
Proposition 14 and Corollary 15 of the journal version,
(Proposition 12 and Corollary 13 of the conference version)
have shown that if $g$ is computable then
$g$ is not a bounding function for (DPDA,PDA) in a very strong way.
They showed that {\it for all $n>0$} there is a language $L_n$ 
such that (1) there is a PDA for $L_n$ of size $O(n)$, but
(2) any DPDA for $L_n$ requires size at least $g(n)$.
\item
Hay~\cite{haylang} showed that
if $f$ is the bounding function for (DPDA,PDA) then $HALT\TL f$.
She also showed that there is a bounding function $f$ for (DPDA,PDA) such
that $f\TLE INF$.
($INF$ is the set of all indices of Turing machines that halt on an infinite number
of inputs. It is complete for the second level of the arithmetic hierarchy and
hence strictly harder than $HALT$.)
\item
Gruber et al.~\cite{descabstract} proved several general theorems about
sizes of languages. Hay's result above is a corollary of their
theorem.
\end{enumerate}
\end{example}

\begin{note}
The results above that conclude $HALT\TLE f$ were not stated that way
in the original papers. They were stated as either {\it $f$ is not recursive}
or {\it $f$ is not recursively bounded}. However, an inspection of their proofs
yields that they actually proved $HALT\TLE f$.
\end{note}

\begin{definition}\label{de:cbd}
Let $\cal M$  be a set of devices.
A {\it c-bounding function for $\cal M$} is a function $f$ such that
for all $A$ that are recognized by an $\cal M$-device of size $n$,
if $\overline{A}\in L({\cal M})$ then it is  recognized by an $\cal M$- device,
of size $\le f(n)$.
One linguistic issue--- we will write (for example)
{\it c-bounding function for PDAs}
rather than 
{\it c-bounding function for PDA} since the former flows better verbally.
\end{definition}

We now give some examples and known results.

\begin{example}\label{ex:cbd}~
\begin{enumerate}
\item
$f(n)=2^n$ is a c-bounding function for NDFAs.
This uses the standard proofs that $L(\NDFA)\subseteq L(\DFA)$ and
that $L(\DFA)$ is closed under complementation.
\item
$f(n)=O(n)$ is a c-bounding function for DPDAs. This is an easy exercise in
formal language theory.
\item
$f(n)=O(n)$ is a c-bounding function for LBAs. This can be obtained by an inspection of
the proof, by Immerman-Szelepcsenyi~\cite{Immerman,Sz}, that nondeterministic linear space is closed under complementation.
\end{enumerate}
\end{example}

\begin{definition}\label{de:ibd}
Let $\cal M$  be a set of devices.
An {\it i-bounding function for $\cal M$} is a function $f$ such that
for all $A_1,A_2$ that are recognized by an $\cal M$-device of size $n$,
if $A_1\cap A_2\in L({\cal M})$ then it is  recognized by an $\cal M$- device,
of size $\le f(n)$.
One linguistic issue--- we will write (for example)
{\it i-bounding function for PDAs}
rather than 
{\it i-bounding function for PDA} since the former flows better verbally.
\end{definition}

\begin{example}\label{ex:ibd}~
\begin{enumerate}
\item
$f(n)=2n$ is an i-bounding function for DFA.
This uses the standard proofs that $L(\DFA)$ is closed under intersection.
\item
$f(n)=2^{2n}$ is an i-bounding function for NDFAs.
Convert both NDFAs to DFAs and then use the standard proof that $L(\DFA)$
is closed under intersection.
\end{enumerate}
\end{example}

\begin{note}
We will state our results in terms of DPDAs, PDAs, and LBAs.
Hence you may read expressions like $L(\PDA)$ and think 
{\it isn't that just CFLs?} It is. We do this to
cut down on the number of terms this paper refers to.
\end{note}

\section{Facts and Notation}

We will need the following notation and facts to state our results.
We will prove the last item since it seems to not be as well known as
the others.

\begin{fact}\label{fa:turing}~
\begin{enumerate}
\item
$M_0,M_1,M_2,\ldots$ is a standard numbering of all deterministic Turing Machines.
\item
$M_{e,s}(x)$ is the result of running $M_e(x$) for $s$ steps.
\item
$HALT$ is the set $\{ (e,x)\st (\exists s)[M_{e,s}(x)\hbox{ halts }\}$. 
$HALT$ is $\Sigma_1$-complete. 
Hence any $\exists$ question can be phrased as a query to $HALT$.
Note that any $(\forall)$ question can also be phrased as a query; however,
you will have to negate the answer.
\item
$INF$ is the set $\{ e\st (\forall x)(\exists y,s)[y>x \wedge M_{e,s}(y)\hbox{ halts }]\}$.
$INF$ is $\Pi_2$-complete. 
Hence any $(\forall)(\exists)$ question can be phrased as a query to $INF$.
Note that any $(\exists)(\forall)$ question can also be phrased as a query; however,
you will have to negate the answer.
\item
$A\TLE B$ means that $A$ is decidable given complete access to set $B$.
This can be defined formally with oracle Turing machines.
\item
$f\TLE HALT$ iff there exists a computable $g$ such that, for all $n$,
$f(n)=\lim_{s\goes\infinity} g(n,s)$.
We can take $g$ to have complexity $O(\log(n+s))$, or even lower.
This result is due to Shoenfield~\cite{Shoenfield} and is referred to as
{\it The Shoenfield Limit Lemma.} It is in most computability theory books.
Note that the domain $f$ is $\nat$, the domain of $g$ is $\nat\times\nat$,
and the codomain of both $f$ and $g$ is $\nat$. Hence the $\lim_{s\goes\infinity} g(n,s)$
means that $(\exists s_0,x)(\forall s\ge s_0)[g(n,s)=x]$.
\end{enumerate}
\end{fact}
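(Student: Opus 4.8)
The plan is to prove the two directions of the equivalence by separate constructions and then indicate how the complexity bound on $g$ is squeezed out. For the direction ``$f(n)=\lim_{s\to\infty}g(n,s)$ for a computable $g$ implies $f\TLE HALT$'': I would give an oracle procedure that, given $n$ and access to $HALT$, finds the eventual value of the sequence $g(n,0),g(n,1),\dots$. Maintain a current stage $s_0$ (initially $0$) and value $v=g(n,s_0)$, and repeatedly ask $HALT$ the $\Sigma_1$ question ``$(\exists s>s_0)\,[\,g(n,s)\ne v\,]$'', which is a legitimate $HALT$ query since $g$ is computable (cf.\ item~3 of Fact~\ref{fa:turing}): if the answer is ``no'', halt and output $v$; if ``yes'', run an unbounded-but-terminating search for the least such $s$, reset $s_0\gets s$ and $v\gets g(n,s)$, and loop. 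Since $\lim_s g(n,s)$ exists, the sequence changes value only finitely often, so the loop runs only finitely many iterations, terminates, and returns $f(n)$; hence $f\TLE HALT$.

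For the converse, ``$f\TLE HALT$ implies $f$ is a computable limit'': fix an oracle machine $M^{(\cdot)}$ with $M^{HALT}(n)=f(n)$ for all $n$, and approximate the oracle by the sets $HALT_s=\{(e,x)\st M_{e,s}(x)\hbox{ halts}\}$, which are uniformly decidable in $s$, nondecreasing in $s$, contained in $HALT$, and have union $HALT$. Define $g(n,s)$ to be the output of simulating $M(n)$ for at most $s$ steps while answering each oracle query $q$ by ``yes iff $q\in HALT_s$'', and outputting $0$ if this simulation fails to halt within $s$ steps; $g$ is computable. To check $\lim_s g(n,s)=f(n)$: the genuine run $M^{HALT}(n)$ halts after finitely many steps having issued finitely many queries, so choose $s^\ast$ exceeding that running time and large enough that every issued query lying in $HALT$ has already entered $HALT_{s^\ast}$. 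Then for every $s\ge s^\ast$ the approximate oracle agrees with $HALT$ on every query the computation makes --- positive queries because they have appeared by stage $s^\ast$, negative queries because a genuine non-member of $HALT$ lies in no $HALT_s$ --- so the step-capped simulation reproduces the genuine run and outputs $f(n)$; hence $g(n,s)=f(n)$ for all $s\ge s^\ast$.

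The only delicate point, and the one I would expect to be the main obstacle, lies in this converse direction: one must use the single parameter $s$ simultaneously as the step budget for the simulation and as the stage of the oracle approximation, and one must be content that at early stages the simulation may misbehave --- diverging on a query, or receiving a ``no'' that is never revised against it --- which is harmless precisely because of the step cap and because true non-members of $HALT$ never enter any $HALT_s$. Finally, to obtain a $g$ of complexity $O(\log(n+s))$ (or lower) I would not redo the argument from scratch: one pads the stage parameter under an easily invertible pairing so that the computation attributed to each individual stage is trivial, which is the standard sharpening of the Limit Lemma; for this last point I would cite Shoenfield~\cite{Shoenfield} and a textbook treatment rather than reproduce it.
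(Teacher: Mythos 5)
Your proposal is correct and follows essentially the same route as the paper: one direction by simulating the oracle machine for $s$ steps with the stage-$s$ approximation $HALT_s$ (defaulting to $0$ before convergence), and the other by repeatedly querying $HALT$ about whether the sequence $g(n,\cdot)$ has stabilized, which terminates because the limit exists. The only differences are cosmetic (you phrase the stabilization test as the $\Sigma_1$ query ``is there a later disagreement'' and jump to the least change point, while the paper asks the complementary $\forall$-question for $s_0=1,2,3,\ldots$; and both treatments of the $O(\log(n+s))$ bound defer to a slowed-down/padded approximation in the spirit of Shoenfield).
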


\begin{proof}
We just prove Part 6.

Let
$HALT_s=\{ (e,x)\st [M_{e,s}(x)\hbox{ halts }\}$.

Assume $f\TLE HALT$ via oracle Turing machine $M_i^{()}$.
Hence, for all $n$, $M_i^{HALT}(n)$ halts and is equal to $f(n)$.
Since $M^{HALT}(n)$ halts it does so in a finite amount of time and using
only a finite number of queries to $HALT$. Hence there exists $s$ such that
for all $t\ge s$, $f(n) = M_{i,t}^{HALT_t}(n)$.
Therefore the following computable function $g$ works.
\begin{equation}
g(n,s) =
\begin{cases}
0                     & \text{ if $M_{i,s}^{HALT_s}(n)$ has not converged } \\
\hbox{ $M_{i,s}^{HALT_s}(n)$}  & \text{ if it has converged } 
\end{cases}
\end{equation}

We can obtain a $g$ of very low complexity by 
(for example) having $g(n,s)$ 
computer
$M_{i,\lg^* s}^{HALT_{\lg^* s}}$ 
instead of
$M_{i,s}^{HALT_{s}}$ 

Assume there exists a computable $g$ such that, for all $n$,
$f(n)=\lim_{s\goes\infinity} g(n,s)$.
The following algorithm shows $f\TLE HALT$.

\noindent
ALGORITHM

\noindent
Input($n$)

\noindent
For $s_0=1$ to $\infinity$ (we will show that this terminates)

\qquad Ask $HALT$ $(\forall s\ge s_0)[ g(n,s)=g(n,s+1)]$.

\qquad If YES then output $g(n,s)$ and STOP.

\noindent
END OF ALGORITHM

Since $\lim_{s\goes\infinity} g(n,s)$ exists there will be an $s_0$ such that
the answer to the $HALT$ question is YES. Hence the algorithm terminates.

\end{proof}

\section{Summary of Results}\label{se:summary}

In this section we summarize our results. We also present them
in a table at the end of this section.

The results of Hartmanis~\cite{hartlang} and Hay~\cite{haylang} mentioned
in Exercise~\ref{ex:lower}
above leave open the exact Turing degree of the bounding function for
(DPDA,PDA).
In Section~\ref{se:gen} we resolve this question by proving a 
general theorem from which we obtain the following:

\begin{enumerate}
\item
If $f$ is a bounding function for (DPDA,PDA) then $INF\TLE f$.
\item
There exists a bounding function for (DPDA,PDA) such that $f\TLE INF$.
(Hay~\cite{haylang} essentially proved this; however, we restate and reprove in
our terms.)
\item
If $INF\not\TLE f$ then for infinitely many $n$ there exists a language 
$A_n\in L(\DPDA)$
such that 
(1) any DPDA that recognizes $A_n$ requires size $\ge f(n)$,
(2) there is a PDA of size $\le n$ that recognizes $A_n$.
(This follows from Part 1.)
\item
If $f$ is a bounding function for (PDA,LBA) then $INF\TLE f$.
\item
There exists a bounding function for (PDA,LBA) such that $f\TLE INF$.
\item
If $INF\not\TLE f$ then for infinitely many $n$ there exists a 
language $A_n\in L(\PDA)$
such that 
(1) any PDA that recognizes $A_n$ requires size $\ge f(n)$,
(2) there is an LBA of size $\le n$ that recognizes $A_n$.
(This follows from Part 4.)
\end{enumerate}

In Section~\ref{se:cbd} and~\ref{se:ibd} we find the exact Turing degree
of the c-bounding function and the i-bounding function for PDAs.
We obtain the following:

\begin{enumerate}
\item
If $f$ is a c-bounding function for PDAs then $INF\TLE f$.
\item
There exists a c-bounding function for PDAs such that $f\TLE INF$.
\item
If $INF\not\TLE f$ then for infinitely many $n$ 
there exists a language $A_n$
such that (1) $A_n,\overline{A_n}\in L(\PDA)$,
(2) there is no PDA of size $\le f(n)$ for $\overline{A_n}$, but
(3) there is a PDA of size $\le n$ for $A_n$.
(This follows from Part 1.)
\item
Results 1,2,3 but with i-bounding functions instead of c-bounding functions.
\end{enumerate}

Note that we have several results of the form 
{\it for infinitely many $n$ $\ldots$} that use unnatural languages.
We would like to have 
{\it for all but finitely many $n$ $\ldots$} results that involve natural languages.
We need the following definitions.

\begin{definition}
Let $A(n)$ be a statement about the natural number $n$.
$A(n)$ {\it is true for almost all $n$} means that $A(n)$ is
true for all but a finite number of $n$.
\end{definition}

\begin{definition}
(Informal)
A language is {\it unnatural} if it exists for the sole point of proving
a theorem. 
\begin{example}~
\begin{enumerate}
\item
Languages that involve Turing configurations are not natural.
\item
Languages created by diagonalization are not natural.
\item
The language $\{ ww\st |w|=n \}$ is natural.
\end{enumerate}
\end{example}
\end{definition}

\begin{note}
We will sometimes state theorems as follows:
{\it there exists a (natural) language such that $\ldots$}.
If we do not state it that way then the language is unnatural.
\end{note}

In Sections~\ref{se:bigpda} 
we obtain the following {\it for almost all $n$} results\footnote{Harel and Hirst\cite{hhpda} obtained these independently 21 years ago. We comment on how their
 proofs and our proofs differ in Section~\ref{se:bigpda}.}.

\noindent
For almost all $n$ there exists a (natural) language $A_n\in L(\DPDA)$ 
such that 
\begin{enumerate}
\item
Any DPDA for $A_n$ requires size $\ge 2^{2^{\Omega(n)}}$.
\item
There is a PDA of size $O(n)$ that recognizes $A_n$.
\end{enumerate}

\begin{note}
As noted in Example~\ref{ex:lower}.5
Harel and Hirst~\cite{hhpda} have a stronger result; however,
the language they uses is not natural.
\end{note}

Section~\ref{se:bigpda} also has the following result. In fact,
we prove this result first and derive the result above from it.

\noindent
For almost all $n$ there exists a (natural) language $A_n$ such that
\begin{enumerate}
\item
$A_n,\overline{A_n}\in L(\PDA)$.
\item
Any PDA for $\overline{A_n}$ requires size $\ge {2^{2^{\Omega(n)}}}$.
\item
There is a PDA of size $O(n)$ that recognizes $A_n$.
\end{enumerate}

%
%

In Section~\ref{se:bigishlba} we show the following:

\noindent
For almost all $n$ there exists a (natural) language $A_n\in L(\PDA)$
such that 
\begin{enumerate}
\item
Any PDA for $A_n$ requires size $\ge 2^{2^{\Omega(n)}}$.
\item
There is an LBA of size $O(n)$ that recognizes $A_n$.
\end{enumerate}

In Section~\ref{se:biglba} we obtain\footnote{Meyer originally claimed this result. See
the discussion in Section~\ref{se:biglba}.} a {\it for almost all $n$} result
for (PDA,LBA):

\noindent
Let $f$ be any function such that $f\TLE HALT$.
For almost all $n$ there exists a finite language $A_n$ such that 
\begin{enumerate}
\item
Any PDA for $A_n$ requires size $\ge f(n)$.
\item
There is an LBA of size $O(n)$ that recognizes $A_n$.
\end{enumerate}

We summarize our results in the following tables.

\noindent
{\bf The first table:}
The first two columns, ${\cal M}$ and ${\cal M'}$,
indicate that we are looking at the bounding functions $f$ for 
$({\cal M},{\cal M'})$. The third column yields a property that
any such $f$ must have.
The fourth column states whether we get as a corollary to the proof a
result about
for-infinitely-many $n$ (io) or for-almost-all-n (ae).
The fifth column states if the sets involved are natural or not.
The sixth column states if the condition on $f$ from the third column
captures exactly the Turing degree of the bounding function.
For example, in the first column we see that any bounding
function $f$ for $(\DPDA,\PDA)$ has to satisfy $INF\TLE f$; however,
the YES in column six indicates that there is a bounding function of this
Turing degree.

\[
\begin{array}{|c|c|c|c|c|c|}
\hline
{\cal M} & {\cal M'} & (\forall f) & \hbox{io/ae} & \hbox{Nat?} & \hbox{Exact TD?} \cr
\hline
\DPDA & \PDA & INF\TLE f              & io & NO & YES \cr
\PDA  & \LBA & INF\TLE f              & io & NO & YES \cr
\DPDA & \PDA & f\ge 2^{2^{\Omega(n)}} & ae & YES & NO \cr
\PDA  & \LBA & f\ge 2^{2^{\Omega(n)}} & ae & YES & NO \cr
\PDA  & \LBA & f\not\TLE HALT         & ae & NO & YES \cr
\hline
\end{array}
\]

\noindent
{\bf The second table:}
The first two columns are a set of devices, $\cal M$,
and an operation Op (either Complementation (c) or Intersection (i)).
We are concerned with the Op-bounding functions $f$ for
$\cal M$.
The third column yields a property that any such $f$ must have.
The fourth column states whether we get as a corollary to the proof a
result about for-infinitely-many $n$ (io) or for-almost-all-n (ae).
The fifth column states if the sets involved are natural or not.
The sixth column states if the condition on $f$ from the third column
captures exactly the Turing degree of the bounding function.

\[
\begin{array}{|c|c|c|c|c|c|}
\hline
{\cal M} & \hbox{Op} & (\forall f)                  & \hbox{io/ae}   & \hbox{Nat?} & \hbox{Exact TD?} \cr
\hline
\PDA & \hbox{Complementation}   & INF\TLE f                    & io             & NO          & YES \cr
\PDA & \hbox{Intersection}  & INF\TLE f                    & io             & NO          & YES \cr
\PDA & \hbox{Complementation}   & f\ge 2^{2^{\Omega(n)}}       & ae             & YES         & NO \cr
\hline
\end{array}
\]

\section{Bounding Functions for (DPDA,PDA) and (PDA,LBA)}\label{se:gen}

In this section we prove a general theorem about bounding
functions and then apply it to both (DPDA,PDA) and (PDA,LBA).
In both cases we show that the Turing degree of the bounding function
is in the second level of the arithmetic hierarchy.

We will need to deal just a bit with actual Turing Machines. 

\begin{definition}
Let $M$ be a deterministic Turing Machine.
A {\it configuration (henceforth config)}  of $M$ is a string
of the form $\alpha\substack{q\\\sigma}\beta$ where $\alpha,\beta\in\Sigma^\Kleenestar$,
$\sigma\in\Sigma$, and
$q\in Q$. We interpret this as saying that the machine 
has $\alpha\sigma\beta$ on the tape (with blanks to the left and right),
is in state $q$, and the head is looking at the
square where we put the $\substack{q\\\sigma}$. 
Note that from the config one can determine if the machine has
halted, and also, if not, what the next config is.
\end{definition}

\begin{notation}
If $C$ is a string then $C^R$ is that string written backwards.
For example, if $C=aaba$ then $C^R=abaa$.
\end{notation}

\begin{definition}\label{de:acc}
Let $e,x\in\nat$.
We assume that any halting computation of $M_e$ takes an even number of steps.
\begin{enumerate}
\item
Let $ACC_{e,x}$ be the set of all sequences of config's represented by

$$
\$C_1
\$C_2^R
\$C_3
\$C_4^R
\$
\cdots
\$
C_s^R
\$
$$

such that
\begin{itemize}
\item
$|C_1|=|C_2| = \cdots = |C_s|$.
\item
The sequence $C_1,C_2,\ldots,C_s$ represents an accepting computation of $M_e(x)$.
\end{itemize}
\item
Let $ACC_e= \bigcup_{x\in\nat} ACC_{e,x}$.
\end{enumerate}
\end{definition}

Hartmanis~\cite{hartlang} proved the following lemma.

\begin{lemma}\label{le:intcfl}
For all $e,x$, $\overline{ACC_{e,x}}\in L(\PDA)$.
For all $e$, $\overline{ACC_{e}}\in L(\PDA)$.
In both cases it is computable to take the parameters ($(e,x)$ or $e$)
and obtain the PDA.
\end{lemma}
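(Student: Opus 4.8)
The goal is to show that for each $e$ (and each $x$) the complement of the set $ACC_{e,x}$ (resp.\ $ACC_e$) of correctly-formatted accepting computation histories is context-free, uniformly in the parameters.  The standard approach is the ``a non-computation fails for a \emph{local} reason'' trick: a string of the form $\$C_1\$C_2^R\$C_3\$C_4^R\$\cdots$ fails to lie in $ACC_{e,x}$ iff it exhibits at least one of a small list of defects, each of which is individually checkable by a PDA, and $L(\PDA)$ is closed under (finite) union.  So the plan is: (i) enumerate the defect types; (ii) build a PDA for each; (iii) take their union; (iv) note the whole construction is effective in $e$ (and $x$).

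The \textbf{defect catalogue} is: (a) the string is not even of the right shape $\$(\text{block})\$(\text{block})\$\cdots\$$ with the alternating-reversal formatting and with each block a syntactically legal config (one state symbol, etc.) — this is a \emph{regular} condition, so a DFA handles it; (b) some two consecutive blocks have different lengths, $|C_i|\ne|C_{i+1}|$; (c) $C_1$ is not the correct initial config of $M_e$ on input $x$ (for $ACC_{e,x}$; for $ACC_e$ one instead only checks $C_1$ is \emph{some} legal initial config, again regular); (d) $C_s$ is not an accepting config (regular, a condition on the last block); (e) for some $i$, $C_{i+1}$ is not the config that legally follows $C_i$ under the transition function of $M_e$.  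Defects (a),(c),(d) are regular.  Defect (b) is handled by a PDA that nondeterministically guesses which adjacent pair to inspect, pushes the length of the first block onto the stack, pops while scanning the second, and accepts if the counts disagree — a classic non-context-free-set's-complement move; the adjacency of $C_i$ and $C_{i+1}$ in the string (one written forward, the next reversed) is exactly what makes this a PDA-checkable, rather than LBA-only, condition.  Defect (e) is the crux.

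The \textbf{key step} is defect (e): checking that \emph{some} pair $(C_i,C_{i+1})$ violates the transition rule.  The reason the reversal formatting was imposed is precisely so this is doable with one stack: $C_i$ is written left-to-right and $C_{i+1}$ immediately after it written right-to-left (or vice versa), so as the PDA scans $C_i$ it pushes it onto the stack and then, scanning $C_{i+1}$, pops symbol by symbol and compares \emph{aligned} tape cells.  Since $M_e$'s next-move function is local — cell $j$ of $C_{i+1}$ is determined by cells $j-1,j,j+1$ of $C_i$ together with whether the head is nearby — the PDA can, by keeping a constant-size window (a bounded number of most-recently-popped/most-recently-seen symbols) in its finite control, verify the local update law at every position, and it \emph{accepts} iff it ever detects a mismatch.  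It nondeterministically guesses the offending index $i$ at the start.  One must be slightly careful at block boundaries (the two end cells) and about where the head is, but these are finite-control bookkeeping.  I expect the main obstacle to be purely expository: writing the transition-check PDA cleanly, getting the stack alignment and the boundary cases right, and making visibly explicit that the state set and stack alphabet of the resulting PDA are obtained by an algorithm from $e$ (one reads off $M_e$'s transition table and hard-wires it), so that ``it is computable to take $(e,x)$ or $e$ and obtain the PDA'' is justified.  Finally, union the finitely many PDAs (using the effective closure of $L(\PDA)$ under union), which costs only a constant factor and preserves computability of the construction, yielding $\overline{ACC_{e,x}},\overline{ACC_e}\in L(\PDA)$ as claimed.
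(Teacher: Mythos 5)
Your proposal is correct and is essentially the classical argument: the paper does not prove this lemma itself but cites Hartmanis, and the standard proof behind that citation is exactly your defect-catalogue construction --- the complement is the finite union of a regular format/initial/accepting-config failure language, a length-mismatch language, and the ``some adjacent pair violates the transition rule'' language, with the alternating reversals in the definition of $ACC_{e,x}$ being precisely what lets a PDA push one configuration and pop it against the next to detect a local violation. Nothing further is needed beyond the finite-control bookkeeping and effectivity-in-$(e,x)$ points you already flag.
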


\begin{definition}
Let $\cal M$ and $\cal M'$ be two sets of devices.
\begin{enumerate}
\item
${\cal M}\subseteq {\cal M'}$ {\it effectively} if
there is a computable function that will, given an
$\cal M$-device $P$, output an $\cal M'$-device $P'$
such that $L(P)=L(P')$.
\item
${\cal M}$ is {\it effectively closed under complementation}
if
there is a computable function that will, given an
$\cal M$-device $P$, output an $\cal M$-device $P'$
such that $L(P')=\overline{L(P)}$.

\item
The {\it non-emptiness problem for $\cal M$} is the following:
{\it given an $\cal M$-device $P$ determine if $L(P)\ne\es$}.

\item
The {\it membership problem for $\cal M$} is: {\it given an $\cal M$-device $P$
and $x\in\Sigma^\kstar$ determine if $x\in L(P)$.}

\item
${\cal M}$ is {\it size-enumerable} if there exists a list of devices
$P_1,P_2,\ldots$ such that 
\begin{itemize}
\item
${\cal M}=\{L(P_i) \st i\in \nat\}$,
\item
$(\forall i)[|P_i|\le |P_{i+1}|]$, and
\item
the function from $i$ to $P_i$ is computable.
\end{itemize}
Note that DFA, NDFA, DPDA, PDA, LBA are all size-enumerable, however
UCFG is not.
\end{enumerate}
\end{definition}

\begin{theorem}\label{th:bdd}
Let $\cal M$ and $\cal M'$ be two sets of devices
such that the following hold.
\begin{itemize}
\item
$L({\cal M}) \subseteq L(\PDA) \subseteq L({\cal M'})$ effectively.
\item
At least one of $\cal M$,  $\cal M'$ is effectively closed under complementation.
\item
The non-emptiness problem for $\cal M$ is decidable.
\item
The membership problems for $\cal M$ and $\cal M'$  are decidable.
\item
Every finite set is in $L({\cal M})$.
\item
${\cal M}$ is size-enumerable.
\end{itemize}
Then
\begin{enumerate}
\item
If $f$ is a bounding function for $({\cal M},{\cal M'})$ then $HALT\TLE f$.
\item
If $f$ is a bounding function for $({\cal M},{\cal M'})$ then $INF\TLE f$.
\item
There exists a bounding function $f\TLE INF$ for $({\cal M},{\cal M'})$.
\item
If $INF\not\TLE f$ then for infinitely many $n$ there exists a language 
$A_n\in L({\cal M})$
such that 
(1) any $\cal M$-device that recognizes $A_n$ requires size $\ge f(n)$,
(2) there is an $\cal M'$-device of size $\le n$ that recognizes $A$.
(This follows from Part 2 so we do not prove it.)
\end{enumerate}
\end{theorem}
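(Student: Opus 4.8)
The plan is to prove Part~1, bootstrap it to Part~2, give a direct construction for Part~3, and observe that Part~4 follows from Part~2. For Part~1 the idea is to turn a value of $f$ into a halting‑time bound. Fix $e,x$; by a routine modification assume $M_e(x)$ halts iff it accepts. Using Lemma~\ref{le:intcfl} one effectively obtains a PDA for $\overline{ACC_{e,x}}$; intersecting it with a suitable regular ``frame'' (a small DFA pinning down well‑formatted computation strings of $M_e(x)$) and pushing through $L(\PDA)\subseteq L({\cal M'})$ effectively, one obtains, uniformly in $e,x$, an ${\cal M'}$-device for a language $A_{e,x}$ together with a bound $n_{e,x}$ on its size that is computable from $e,x$ \emph{and independent of the running time of $M_e(x)$}, where $A_{e,x}$ is finite (hence in $L({\cal M})$) and has the feature that whenever $M_e(x)$ halts in $s$ steps, every ${\cal M}$-device recognizing $A_{e,x}$ has size $\ge g(s)$ for a fixed computable, unbounded, nondecreasing $g$. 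Then, with oracle $f$: compute $n_{e,x}$, query $v:=f(n_{e,x})$, and simulate $M_e(x)$ for $T:=\max\{s:g(s)\le v\}$ steps, answering YES iff it halts by step $T$. This is correct: if $M_e(x)$ halts at step $s$, then since $A_{e,x}\in L({\cal M})$ has an ${\cal M'}$-device of size $n_{e,x}$ the bounding property gives an ${\cal M}$-device of size $\le v$, so $g(s)\le v$, i.e.\ $s\le T$. Hence $HALT\TLE f$.

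For Part~2 I would repeat this one quantifier up, using $\overline{ACC_e}$ in place of $\overline{ACC_{e,x}}$: one effectively obtains, uniformly in $e$, an ${\cal M'}$-device of size $n_e$ (computable, independent of the halting behaviour of $M_e$) for a language $B_e$ that lies in $L({\cal M})$ whenever $M_e$ halts on only finitely many inputs, and that has the feature that if $M_e$ halts on the input of index $m$ then every ${\cal M}$-device recognizing $B_e$ has size $\ge h(m)$ for a fixed computable, unbounded, nondecreasing $h$. Then, with oracle $f$: compute $n_e$, set $v:=f(n_e)$ and $N:=\max\{m:h(m)\le v\}$, and — using that $HALT\TLE f$ by Part~1, so the oracle $f$ decides the relevant $\Sigma_1$ question — decide whether $M_e$ halts on some input of index $>N$; output $e\in INF$ if so and $e\notin INF$ otherwise. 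Correctness: a NO answer means $M_e$ halts only on inputs of index $\le N$, so on finitely many, so $e\notin INF$; and if $e\notin INF$ then $B_e\in L({\cal M})$ has an ${\cal M'}$-device of size $n_e$, hence an ${\cal M}$-device of size $\le v$, forcing $h(m^*)\le v$ and $m^*\le N$ for the largest index $m^*$ on which $M_e$ halts, so a YES answer (a halting input of index $>N$) is impossible. Hence $INF\TLE f$.

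For Part~3 I would exhibit an optimal bounding function and bound its Turing degree. Let $f(n)$ be the least $N$ such that
\[
(\forall\ {\cal M'}\text{-device }P,\ |P|\le n)\,(\forall\ {\cal M}\text{-device }Q)\,\big[\,L(Q)=L(P)\ \Rightarrow\ (\exists\ {\cal M}\text{-device }Q',\ |Q'|\le N)\,[\,L(Q')=L(P)\,]\,\big].
\]
This $f$ is total (take $N$ to be the maximum, over the finitely many $P$ of size $\le n$ with $L(P)\in L({\cal M})$, of the least size of an ${\cal M}$-device for $L(P)$; the condition holds vacuously for the other $P$), and it is a bounding function (given $A\in L({\cal M})$ with ${\cal M'}$-device $P$ of size $n$, instantiate at $N=f(n)$ with this $P$ and any ${\cal M}$-device $Q$ for $A$). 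And $f\TLE INF$: by decidability of membership for ${\cal M}$ and ${\cal M'}$, ``$L(Q)=L(P)$'' is $\Pi_1$ (it is $\forall w\,[w\in L(Q)\leftrightarrow w\in L(P)]$), so by size‑enumerability of ${\cal M}$ the inner $\exists Q'$ is a finite disjunction of $\Pi_1$ predicates, hence $\Pi_1$; the implication is then $\Sigma_1\vee\Pi_1\subseteq\Pi_2$, the unbounded $\forall Q$ keeps it $\Pi_2$, and the bounded $\forall P$ preserves $\Pi_2$; so ``$f(n)\le N$'' is a $\Pi_2$ predicate of $(n,N)$ and $f$ is computable from the $\Pi_2$‑complete set $INF$. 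Part~4 is then immediate from Part~2.

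The hard part will be the two language constructions ($A_{e,x}$ and $B_e$): from the single tool Lemma~\ref{le:intcfl} (small PDAs for complements of computation sets), together with closure of PDAs under intersection with regular sets and the effective inclusions, one must obtain a language that simultaneously (i) always has a small, effectively computable ${\cal M'}$-device, (ii) lies in $L({\cal M})$ in the relevant case, and (iii) has \emph{finite} minimum ${\cal M}$-device size that nonetheless grows computably and unboundedly with the running time (resp.\ with the largest halting input). Item (iii) is the real work — a fooling/counting argument showing no small ${\cal M}$-device can recognize the language, exploiting that ${\cal M}$ is size‑enumerable so there are only finitely many ${\cal M}$-languages of each size — and it is there, together with (i) and (ii), that the hypotheses ``every finite (in the applications, every regular) set is in $L({\cal M})$'' and ``one of ${\cal M},{\cal M'}$ is effectively closed under complementation'' get used; the decidability hypotheses (non‑emptiness for ${\cal M}$, membership for ${\cal M}$ and ${\cal M'}$) are what keep the oracle algorithms of Parts~1–2 and the complexity estimate of Part~3 effective.
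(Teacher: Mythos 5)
Your Part~3 is correct and is essentially the paper's construction (an optimal bounding function whose graph is a $\Pi_2$ predicate, hence computable from $INF$), and Part~4 is indeed immediate from Part~2. The problem is Parts~1 and~2, which are the heart of the theorem. Your oracle algorithms are correct only conditional on property~(iii): a uniformly constructed \emph{finite} language $A_{e,x}$ (resp.\ $B_e$) whose minimum ${\cal M}$-device size is provably $\ge g(s)$ where $s$ is the running time (resp.\ $\ge h(m)$ where $m$ is a halting input index). You defer this as ``the real work,'' but it is not merely unproven --- it cannot be extracted from the hypotheses of this theorem at all. Every hypothesis is of upper-bound/decidability type (every finite set is in $L({\cal M})$, membership and non-emptiness decidable, size-enumerability, effective inclusions and complementation); none gives any mechanism for lower-bounding the size of an ${\cal M}$-device for a specific language. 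Worse, in the paper's own applications the natural candidate fails outright: with size defined as number of states plus stack symbols, a singleton language such as $ACC_{e,x}$ (the lone accepting computation) is recognized by an $O(1)$-size DPDA or PDA no matter how long the computation is, since one transition can push the whole string onto the stack and the machine then just compares. So the size of the smallest ${\cal M}$-device need not grow with the running time, and your reduction (``simulate for $T=\max\{s: g(s)\le v\}$ steps'') has no valid $T$ to use.

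The paper's proof uses $f$ in the opposite direction and needs no size lower bounds. For Part~1: from Hartmanis's PDA for $\overline{ACC_{e,x}}$ (Lemma~\ref{le:intcfl}) and the effective inclusion $L(\PDA)\subseteq L({\cal M'})$ one gets an ${\cal M'}$-device $Q$; since $ACC_{e,x}$ is finite it lies in $L({\cal M})$, so, using whichever of ${\cal M},{\cal M'}$ is effectively closed under complementation, $f(|Q|)$ (or $f(|R|)$, with $R$ an ${\cal M'}$-device for $ACC_{e,x}$) bounds the size of \emph{some} ${\cal M}$-device whose language is $ACC_{e,x}$. One then lists all ${\cal M}$-devices up to that size (size-enumerability), and from each candidate extracts a witness string using decidable non-emptiness plus membership, checking whether any witness encodes an accepting computation of $M_e(x)$; this decides $HALT$. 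Part~2 is analogous with $ACC_e$: if $e\in INF$ then $ACC_e\notin L(\PDA)\supseteq L({\cal M})$, while if $e\notin INF$ then $ACC_e$ is finite and hence equals $L(E_i)$ for one of the finitely many candidates $E_i$; whether $ACC_e$ differs from every candidate is a $\Sigma_1$ question, answered via $HALT\TLE f$ from Part~1. If you wish to rescue your route you would have to build, uniformly in $e,x$, finite languages with certified, computably growing ${\cal M}$-device size lower bounds --- a Harel--Hirst-style construction tied to specific device classes, not something the abstract hypotheses of Theorem~\ref{th:bdd} can deliver.
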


\begin{proof}

\noindent
1) If $f$ is a bounding function for $({\cal M},{\cal M'})$ then $HALT\TLE f$.

Note that

\begin{itemize}
\item
If $M_e(x)$ halts then $ACC_{e,x}$ has one string, which is the accepting computation of $M_e(x)$.
\item
If $M_e(x)$ does not halt  then $ACC_{e,x}=\es$.
\item
Given $e,x$ one can construct a PDA for $\overline{ACC_{e,x}}$ by Lemma~\ref{le:intcfl}.
\end{itemize}

We give the algorithm for $HALT\TLE f$.
There will be two cases in it depending on which of
$\cal M$ or $\cal M'$ is effectively closed under complementation.

\noindent
{\bf ALGORITHM FOR $HALT\TLE f$}

\begin{enumerate}
\item
Input$(e,x)$
\item
Construct the PDA $P$ for $\overline{ACC_{e,x}}$.
Obtain the device $Q$ in $\cal M'$ that accepts $\overline{ACC_{e,x}}$.

\item

\noindent
{\bf Case 1:} $\cal M$ is effectively closed under complementation.
Compute $f(|Q|)$.
Let $D_1,\ldots,D_t$ be all of the $\cal M$-devices of size $\le f(|Q|)$.
Create the $\cal M$ devices for their complements,
which we denote $E_1,\ldots,E_t$.

\noindent
{\bf Case 2:} $\cal M'$ is effectively closed under complementation.
Find an ${\cal M'}$- device $R$ for $\overline{L(Q)}=ACC_{e,x}$.
Compute $f(|R|)$.
Let $E_1,\ldots,E_t$ be all of the $\cal M$-devices of size $\le f(|R|)$.

\bigskip
Note that at the end of step 3, regardless of which case happened,
we have a set of ${\cal M}$-devices $E_1,\ldots,E_t$
such that

$(e,x)\in HALT$ iff 

$(\exists 1\le i\le t)[L(E_i)\hbox{ is one string which represents an accepting computation of $M_e(x)$}]$.

\item
For each $1\le i\le t$ (1) determine if $L(E_i)=\es$
(2) if $L(E_i)=\es$ then let $w_i$ be the empty string, and
(3) if $L(E_i)\ne \es$ then, in lexicographical order, test strings for membership
in $L(E_i)$ until you find a string in $L(E_i)$ which we denote $w_i$.
If $\{w_1,\ldots,w_s\}$ contains a string representing an
accepting computation of $M_e(x)$ then output YES.
If not then output NO.
\end{enumerate}

\noindent
{\bf END OF ALGORITHM}

\bigskip

\noindent
2) If $f$ is a bounding function for $({\cal M},{\cal M'})$ then $INF\TLE f$.

Note that

\begin{itemize}
\item
If $e\in INF$ then $ACC_e\notin L(\PDA)$
since
$ACC_e$ is infinite and every string in it begins with
$\$C_1\$C_2^R\$C_3\$$ where $|C_1|=|C_2^R|=|C_3|$.
\item
If $e\notin INF$ then $ACC_e\in L(\PDA)$ since $ACC_e$ is finite.
\item
Given $e$ one can construct a PDA for $\overline{ACC_{e}}$ by Lemma~\ref{le:intcfl}.
\end{itemize}

We give the algorithm for $INF\TLE f$. 
There will be two cases in it depending on which of 
$\cal M$ or $\cal M'$ is effectively closed under complementation.

In the algorithm below we freely use Fact~\ref{fa:turing}.2 to phrase
$(\exists)$-questions as queries to $HALT$, and Part 1 to answer
queries to $HALT$ with calls to $f$.

\noindent
{\bf ALGORITHM FOR $INF\TLE f$}

\begin{enumerate}
\item
Input$(e)$
\item
Construct the PDA $P$ for $\overline{ACC_{e}}$.
Obtain the device $Q$ in $\cal M'$ that accepts $\overline{ACC_{e}}$.
\item
There are two cases.

\noindent
{\bf Case 1:} $\cal M$ is effectively closed under complementation.
Compute $f(|Q|)$.
Let $D_1,\ldots,D_t$ be all of the $\cal M$-devices of size $\le f(|Q|)$.
Create the $\cal M$ devices for their complements,
which we denote $E_1,\ldots,E_t$.

\noindent
{\bf Case 2:} $\cal M'$ is effectively closed under complementation.
Find an ${\cal M'}$- device $R$ for $\overline{L(Q)}=ACC_e$.
Compute $f(|R|)$.
Let $E_1,\ldots,E_t$ be all of the $\cal M$-devices of size $\le f(|R|)$.

\bigskip

Note that at the end of step 3, regardless of which case happened,
we have a set of ${\cal M}$-devices $E_1,\ldots,E_t$
such that

$e\in INF$ 
$\implies ACC_e\notin L(\PDA)$ 
$\implies$ 
$ACC_e\notin L({\cal M})$ 
$\implies$ 
$ACC_e\notin \{L(E_1),\ldots,L(E_t)\}$ 
$\implies$ 
$(\exists x_1,\ldots,x_t)(\forall 1\le i\le t)[ACC_e(x_i)\ne E_i(x_i)]$.

\bigskip

$e\notin INF$ 
$\implies ACC_e\hbox{ is finite }$
$\implies$
$ACC_e\in L({\cal M})$
$\implies$
$(\exists 1\le i\le t)[L(E_i)=ACC_e]$
$\implies$
$\neg(\exists x_1,\ldots,x_t)(\forall 1\le i\le t)[ACC_e(x_i)\ne E_i(x_i)]$.

\item
Ask $(\exists x_1,\ldots,x_t)(\forall 1\le i\le t)[ACC_e(x_i)\ne E_i(x_i)].$
(Note that $ACC_e$ is decidable so this is a $(\exists)$ question.)
If YES then output YES.
If NO then output NO.
\end{enumerate}

\bigskip

\noindent
3) There exists a bounding function $f\TLE INF$ for $({\cal M},{\cal M'})$.

In the algorithm below we freely use Fact~\ref{fa:turing}.3 to phrase
$(\exists)(\forall)$-questions as queries to $INF$.

\noindent
{\bf Algorithm for $f$}

\begin{enumerate}
\item
Input$(n)$
\item
MAX=0.
\item
For every $\cal M'$-device $P$ of size $\le n$ do the following
\begin{enumerate}
\item
Ask $(\exists {\cal M}\hbox{-device $D$})(\forall x)[P(x)=D(x)]\hbox{?}$

\item
If YES then for $i=1,2,3,\ldots$ ask 
$(\exists {\cal M}\hbox{-device $D$}, |D|=i)(\forall x)[P(x)=D(x)]\hbox{?}$

\noindent
until the answer is YES. 
\item
Let $i$ be the value of $i$ when the last step stopped.
Note that $(\exists D, |D|=i)(\forall x)[P(x)=D(x)].$
If $i>MAX$ then $MAX=i$.
\end{enumerate}
\item
Output MAX.
\end{enumerate}
\end{proof}

\begin{corollary}\label{co:main}~
\begin{enumerate}
\item
If $f$ is a bounding function for (DPDA,PDA) then $INF\TLE f$.
\item
There exists a bounding function for (DPDA,PDA) such that $f\TLE INF$.
\item
If $INF\not\TLE f$ then for infinitely many $n$ there exists a 
language $A_n\in L(\DPDA)$
such that 
(1) any DPDA that recognizes $A_n$ requires size $\ge f(n)$ for $A_n$, but
(2) there is a PDA of size $\le n$ that recognizes $A_n$.
\item
If $f$ is a bounding function for (PDA,LBA) then $INF\TLE f$.
\item
There exists a bounding function for (PDA,LBA) such that $f\TLE INF$.
\item
If $INF\not\TLE f$ then for infinitely many $n$ there exists a 
language $A_n\in L(\PDA)$
such that 
(1) any PDA that recognizes $A_n$ requires size $\ge f(n)$,
(2) there is an LBA of size $\le n$ that recognizes $A_n$.
\end{enumerate}
\end{corollary}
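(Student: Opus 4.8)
The plan is to obtain all six parts as instances of Theorem~\ref{th:bdd}. I will check that the pair $({\cal M},{\cal M'})=(\DPDA,\PDA)$ satisfies the six hypotheses of that theorem, which immediately yields parts 1 and 2 (from Theorem~\ref{th:bdd}.2 and~\ref{th:bdd}.3); part 3 then follows from part 1 in the same way Theorem~\ref{th:bdd}.4 follows from Theorem~\ref{th:bdd}.2. Then I will repeat the check for $({\cal M},{\cal M'})=(\PDA,\LBA)$ to get parts 4, 5, 6.

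First, for $(\DPDA,\PDA)$: the chain $L(\DPDA)\subseteq L(\PDA)\subseteq L(\PDA)$ is effective because a DPDA already is a PDA and the second inclusion is the identity. I will take $\cal M=\DPDA$ as the class that is effectively closed under complementation: after the standard preprocessing that makes the DPDA total and free of $\ep$-loops, swapping accepting and non-accepting states produces a DPDA for the complement, and this transformation is computable (this is the effective content of Example~\ref{ex:cbd}.2). Non-emptiness for DPDAs is decidable, since one can effectively produce an equivalent CFG and emptiness of context-free languages is decidable; membership for DPDAs and for PDAs is decidable (simulate the DPDA, or convert the PDA to a CFG and run a CYK-type recognizer). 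Every finite set is a DCFL, hence lies in $L(\DPDA)$, and $\DPDA$ is size-enumerable, as noted just after the definition of size-enumerability. Hence Theorem~\ref{th:bdd} applies to this pair.

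Next, for $(\PDA,\LBA)$: the chain $L(\PDA)\subseteq L(\PDA)\subseteq L(\LBA)$ is effective, the first inclusion being the identity and the second obtained by composing the effective constructions behind $L(\CFG)\subseteq L(\CSG)$ and $L(\CSG)\subseteq L(\LBA)$ from Example~\ref{ex:bd} (equivalently, a direct linear-space simulation of a PDA). Since PDAs are not closed under complementation, this time I will use that $\cal M'=\LBA$ is effectively closed under complementation --- exactly the effective form of the Immerman--Szelepcs\'enyi theorem quoted in Example~\ref{ex:cbd}.3. Non-emptiness for PDAs is decidable (emptiness of CFLs); membership for PDAs is decidable, and membership for LBAs is decidable because an LBA run on an input of length $m$ has only finitely many configurations, so looping can be detected. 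Every finite set is regular, hence in $L(\PDA)$, and $\PDA$ is size-enumerable. So Theorem~\ref{th:bdd} applies again and gives parts 4, 5, 6.

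The only point that requires any care is keeping straight which member of each pair carries the effective-closure-under-complementation hypothesis: for the first pair it is the smaller class $\DPDA$ (Case~1 in the proof of Theorem~\ref{th:bdd}), while for the second pair it must be the larger class $\LBA$ (Case~2), precisely because context-free languages are not closed under complementation. The remaining hypotheses are all standard facts of formal language theory, and the work is simply confirming that the textbook constructions and decidability arguments are effective in the sense the theorem demands --- which they are.
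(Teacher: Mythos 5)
Your proposal is correct and follows essentially the same route as the paper: the paper's proof of this corollary is a one-line appeal to Theorem~\ref{th:bdd}, noting the premises are obvious or well known, and you simply verify those premises explicitly for the pairs $(\DPDA,\PDA)$ and $(\PDA,\LBA)$, including the correct choice of which class carries effective closure under complementation in each case.
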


\begin{proof}
We can apply Theorem~\ref{th:bdd} to all the relevant pairs
since all of the premises needed are either obvious or well known.
\end{proof}

\begin{note}\label{de:mult}
Since deterministic time classes are effectively closed under complementation
we can also apply Theorem~\ref{th:bdd} to get a corollaries about
any deterministic time class that contains $L(\PDA)$.
Let 
$$
\omega=
\inf
\{\alpha\st
\hbox{Two $n\times n$ Boolean matrices can be multiplied in time $O(n^\alpha)$}
\}.
$$
Le Gall~\cite{matmult} has the current best upper bound: $\omega< 2.3728639$.
We abuse notation by letting, for all $\alpha>0$,  $\DTIME(n^\alpha)$ 
be the set of all deterministic Turing
machines that run in time $O(n^\alpha)$.
Valiant~\cite{cflmatrix} showed that that, for all $\alpha>\omega$,
$L(\PDA) \subseteq L(\DTIME(n^{\alpha}))$.
If Boolean matrix multiplication really is in $\DTIME(n^\omega)$ then so is $L(\PDA)$.
(Lee~\cite{cflmatrixreq} showed that 
if $L(\PDA)\subseteq \DTIME(n^{3-\epsilon})$ then
$\omega \le 3-(\epsilon/3)$; 
therefore the problems of $L(\PDA)$ recognition and
matrix multiplication are closely linked.)
Hence, for all $\alpha>\omega$ (and possibly for $\omega$ also) we can obtain
a corollary about $\DTIME(n^\alpha)$ 
that is similar to Corollary~\ref{co:main}.
\end{note}

\section{c-Bounding Functions for PDAs}\label{se:cbd}

\begin{theorem}\label{th:cbd}~
\begin{enumerate}
\item
If $f$ is a c-bounding function for PDAs then $HALT\TLE f$.
\item
If $f$ is a c-bounding function for PDAs then $INF\TLE f$.
\item
There exists a c-bounding function $f\TLE INF$ for PDAs.
(This is almost identical to the proof of Theorem~\ref{th:bdd}.3
so we do not prove it.)
\item
If $INF\not\TLE f$ then for infinitely many $n$ 
there exists a language $A_n$
such that (1) $A_n,\overline{A_n}\in L(\PDA)$,
(2) there is no PDA of size $\le f(n)$ for $\overline{A_n}$, but
(3) there is a PDA of size $\le n$ for $A_n$.
(This follows from Part 2 so we do not prove it.)
\end{enumerate}
\end{theorem}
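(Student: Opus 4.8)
The plan is to reuse the architecture of the proof of Theorem~\ref{th:bdd}, with the languages $ACC_{e,x}$ and $ACC_e$ of Definition~\ref{de:acc} and their complements supplied by Lemma~\ref{le:intcfl}, but exploiting a simplification special to the c-bounding setting: Definition~\ref{de:cbd} hands us a size bound on a PDA for $\overline{A}$ directly from a PDA for $A$, so we never need $L(\PDA)$ to be (effectively) closed under complementation. That is exactly the closure property $L(\PDA)$ lacks, and its absence is what kept Theorem~\ref{th:bdd} from applying here verbatim; for c-bounding functions it simply is not needed.

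\textbf{Part 1 ($HALT\TLE f$).} Given $(e,x)$, use Lemma~\ref{le:intcfl} to construct a PDA $P$ for $\overline{ACC_{e,x}}$ and set $n=|P|$. The language $ACC_{e,x}$ is always in $L(\PDA)$: it is a single string, the accepting computation of $M_e(x)$, when $M_e(x)$ halts, and $\es$ otherwise. So applying the c-bounding function $f$ to $A=\overline{ACC_{e,x}}$ guarantees that $ACC_{e,x}=\overline{A}$ has a PDA of size $\le f(n)$. Query the oracle for $f(n)$, enumerate the finitely many PDAs $D_1,\dots,D_t$ of size $\le f(n)$ (PDA is size-enumerable), and conclude $ACC_{e,x}\in\{L(D_1),\dots,L(D_t)\}$. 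Now $(e,x)\in HALT$ iff some $L(D_i)$ contains a string coding an accepting computation of $M_e(x)$: if $M_e(x)$ halts then the $D_i$ with $L(D_i)=ACC_{e,x}$ supplies exactly that string, and conversely any such string witnesses halting. Since emptiness and membership for CFLs are decidable, for each $i$ decide whether $L(D_i)=\es$ and, if not, search lexicographically for some $w_i\in L(D_i)$; then check (decidably) whether any $w_i$ is an accepting computation of $M_e(x)$. This is a total procedure relative to $f$, so $HALT\TLE f$.

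\textbf{Part 2 ($INF\TLE f$).} Run the same scheme with $ACC_e$: build a PDA for $\overline{ACC_e}$ by Lemma~\ref{le:intcfl}, of size $m$. If $e\notin INF$ then $ACC_e$ is finite, hence in $L(\PDA)$, so the c-bounding property yields a PDA for $ACC_e$ of size $\le f(m)$; if $e\in INF$ then $ACC_e$ is infinite and, by the ``valid computations are not context free'' pumping argument recalled in the proof of Theorem~\ref{th:bdd}, $ACC_e\notin L(\PDA)$ at all. Enumerating the PDAs $E_1,\dots,E_t$ of size $\le f(m)$, we get $e\in INF$ iff $ACC_e\notin\{L(E_1),\dots,L(E_t)\}$ iff $(\exists x_1,\dots,x_t)(\forall i\le t)[ACC_e(x_i)\ne E_i(x_i)]$. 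Since $ACC_e$ is a decidable set and CFL membership is decidable, the matrix is decidable and the whole statement is $\Sigma_1$; by Fact~\ref{fa:turing}.3 phrase it as a query to $HALT$, and $HALT\TLE f$ by Part 1. Hence $INF\TLE f$.

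\textbf{Parts 3, 4, and the main obstacle.} Part 3 follows exactly as Theorem~\ref{th:bdd}.3, with the test there replaced by: is $\overline{L(P)}\in L(\PDA)$, and does some PDA $D$ with $|D|=i$ satisfy $(\forall x)[\overline{L(P)}(x)=D(x)]$ --- these are $\Sigma_2$ questions, answerable from $INF$ by Fact~\ref{fa:turing}.4, so the resulting $f$ is a c-bounding function with $f\TLE INF$. Part 4 is the contrapositive of Part 2 plus the routine remark that a c-bounding function failing for only finitely many $n$ could be patched on those points by a finite, hence computable, function without changing its Turing degree, contradicting Part 2; so it fails for infinitely many $n$, and each such $n$ supplies the required $A_n$. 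The only point that needs care --- and the nearest thing to an obstacle --- is that we cannot decide whether a CFL equals the decidable set $ACC_{e,x}$ (or $ACC_e$): in Part 1 this forces us to recover the answer via the lexicographically least member of each $L(D_i)$, and in Part 2 to rephrase ``$ACC_e\notin\{L(E_1),\dots,L(E_t)\}$'' as the $\Sigma_1$ statement that each $E_i$ disagrees with $ACC_e$ on some input. Beyond that, the argument is lighter than Theorem~\ref{th:bdd}, because the c-bounding inequality is applied directly to $\overline{ACC_{e,x}}$ and no closure of $L(\PDA)$ under complementation (which fails) is invoked.
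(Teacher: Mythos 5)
Your proof is correct and follows essentially the same route as the paper's: apply the c-bounding function directly to the PDA for $\overline{ACC_{e,x}}$ (resp.\ $\overline{ACC_e}$) from Lemma~\ref{le:intcfl}, check the finitely many candidate PDAs using decidability of emptiness and membership for Part 1, and phrase the disagreement condition as a $\Sigma_1$ query answered via Part 1 for Part 2, with Parts 3 and 4 handled exactly as the paper indicates (mimicking Theorem~\ref{th:bdd}.3 and taking the contrapositive of Part 2 with a finite patch). The only difference is cosmetic: you apply $f$ to the size of the constructed PDA rather than to its index in the size-enumeration, and you spell out details the paper leaves implicit.
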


\begin{proof}

\begin{itemize}
\item
$P_1,P_2,\ldots,$ is a size-enumeration of PDAs.
\item
$f$ is a c-bounding function for PDAs.
\item
$g$ (when on two variables) 
is the computable function such that $\overline{ACC_{e,x}}$ is recognized
by PDA $P_{g(e,x)}$.
\item
$g$ (when on one variable) 
is the computable function such that $\overline{ACC_{e}}$ is recognized
by PDA $P_{g(e)}$.
\end{itemize}

\noindent
1) Let $t=f(g(e,x))$. 

$(e,x)\in HALT$ iff $(\exists 1\le a\le t)[L(P_a)\hbox{ is an accepting computation of $M_e(x)$}].$

Since both the non-emptiness problem and the membership problem for PDAs
is decidable this condition can be checked.

\bigskip

\noindent
2) Let $t=f(g(e))$.

$e\in INF$ 
$\implies ACC_e\notin L(\PDA)$ 
$\implies$ 
${ACC_e}\notin \{L(P_1),\ldots,L(P_t)\}$ 
$\implies$ 

$(\exists x_1,\ldots,x_t)(\forall 1\le i\le t)[P_i(x_i)\ne ACC_e(x_i)]$.

\bigskip

$e\notin INF$ 
$\implies ACC_e\hbox{ is finite }$
$\implies$
${ACC_e}\notin \{L(P_1),\ldots,L(P_t)\}$ 
$\implies$

$\neg(\exists x_1,\ldots,x_t)(\forall 1\le i\le t)[P_i(x_i)\ne ACC_e(x_i)]$.

We can now use $f\TLE HALT$ to determine if
$(\exists x_1,\ldots,x_t)(\forall 1\le i\le t)[P_i(x_i)\ne ACC_e(x_i)]$.
is true or not.
\end{proof}

\section{i-Bounding Functions for PDAs}\label{se:ibd}

\begin{definition}
We use the same conventions 
for Turing machines as in Definition~\ref{de:acc}.
Let $e,x\in\nat$.
\begin{enumerate}
\item
$ODDACC_{e,x}$ be the set of all sequences of config's represented by

$$
\$C_1
\$C_2^R
\$C_3
\$C_4^R
\$
\cdots
\$
C_s^R
\$
$$

such that
\begin{itemize}
\item
$|C_1|=|C_2|$ and $|C_3|=|C_4|$ and $\ldots$ and $|C_{s-1}| = |C_s|$.
\item
For all odd $i$, $C_{i+1}$ is the next config after $C_i$.
(We have no restriction on, say, how  $C_2$ and $C_3$ relate.
They could even be of different lengths.)
\item
$C_s$ represents an accepting config.
\end{itemize}
\item
Let $ODDACC_e= \bigcup_{x\in\nat} ODDACC_{e,x}$.
\item
$EVENACC_{e,x}$ be the set of all sequences of config's represented by

$$
\$C_1
\$C_2^R
\$C_3
\$C_4^R
\$
\cdots
\$
C_s^R
\$
$$

such that
\begin{itemize}
\item
$|C_2|=|C_3|$ and $|C_4|=|C_5|$ and $\ldots$ and $|C_{s-2}| = |C_{s-1}|$.
\item
For all even $i$, $C_{i+1}$ is the next config after $C_i$.
(We have no restriction on, say, how $C_3$ and $C_4$ relate. They could even
be of different lengths. We also have no restriction on $C_1$ except
that it be a config.)
\end{itemize}
\item
Let $EVENACC_e= \bigcup_{x\in\nat} EVENACC_{e,x}$.
\end{enumerate}
\end{definition}

Note that
\begin{enumerate}
\item
$(e,x)\in HALT$ iff $ODDACC_{e,x}\cap EVENACC_{e,x}$ contains only one
string and that string is an accepting computation of $M_e(x)$.
\item
$e\in INF$ iff $ODDACC_e\cap EVENACC_e \notin L(\PDA)$.
\end{enumerate}

Using these two facts you can prove the theorem below in a manner
similar to the proof of Theorem~\ref{th:cbd}.

\begin{theorem}~
\begin{enumerate}
\item
If $f$ is an i-bounding function for PDAs then $HALT\TLE f$.
\item
If $f$ is an i-bounding function for PDAs then $INF\TLE f$.
\item
There exists an i-bounding function $f\TLE INF$ for PDA.
\item
If $INF\not\TLE f$ then for infinitely many $n$ there exists 
languages $A_{n,1}$
and $A_{n,2}$ 
such that (1) $A_{n,1},A_{n_2}\in L(\PDA)$,
(2) there is no PDA of size $\le f(n)$ for $A_{n,1}\cap A_{n,2}$, but
(3) there is a PDA of size $\le n$ for $A_{n,1}\cap A_{n,2}$.
\end{enumerate}
\end{theorem}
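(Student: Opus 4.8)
The plan is to follow the proof of Theorem~\ref{th:cbd} essentially line for line, with $ODDACC$ and $EVENACC$ in the role that $\overline{ACC}$ played there, and with the i-bounding hypothesis --- applied to the pair $(ODDACC_{e,x},EVENACC_{e,x})$, whose intersection is $ACC_{e,x}$ (and likewise for the $e$-versions) --- in the role of the c-bounding hypothesis. The one genuinely new ingredient is an analogue of Lemma~\ref{le:intcfl}, which I would prove first: \emph{for all $e,x$ the languages $ODDACC_{e,x}$ and $EVENACC_{e,x}$ lie in $L(\PDA)$, for all $e$ the languages $ODDACC_e$ and $EVENACC_e$ lie in $L(\PDA)$, and in each case a PDA can be obtained computably from the parameters.} The point is that although $ODDACC_{e,x}$ forces the successor relation on \emph{many} adjacent configuration pairs, those pairs $(C_1,C_2),(C_3,C_4),\dots$ are pairwise disjoint and each consists of a forward configuration immediately followed by a reversed one; so a single-stack machine can dispatch them one at a time --- push $C_i$, then read $C_{i+1}^R$ while popping, checking (with a fixed-length delay so the local neighborhood is visible) that $C_{i+1}$ is the successor of $C_i$ and that $|C_i|=|C_{i+1}|$ --- with the stack empty between consecutive pairs. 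The remaining requirements (``each block is a well-formed configuration'', ``$C_1$ is an initial configuration of $M_e$'', ``$C_s$ is accepting'') are regular, and the union over $x$ in the $e$-versions is absorbed simply by weakening ``$C_1$ is the initial configuration of $M_e$ on $x$'' to ``$C_1$ is an initial configuration of $M_e$ on some input'', still regular. The treatment of $EVENACC$ is identical with all pairs shifted by one.

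Granting that lemma, fix a size-enumeration $P_1,P_2,\dots$ of PDAs, and from each $(e,x)$ (resp.\ each $e$) compute a size bound $n=n(e,x)$ (resp.\ $n(e)$) such that both $ODDACC_{e,x}$ and $EVENACC_{e,x}$ (resp.\ the $e$-versions) are recognized by PDAs of size $\le n$, padding both PDAs up to size $n$ with dummy states. For Part~1: directly from the definitions $ODDACC_{e,x}\cap EVENACC_{e,x}=ACC_{e,x}$, which has at most one element and is therefore a CFL, so an i-bounding function $f$ supplies a PDA of size $\le f(n)$ for it, whence $ACC_{e,x}\in\{L(P_1),\dots,L(P_t)\}$ with $t:=f(n)$; then $(e,x)\in HALT$ iff some $P_a$ with $a\le t$ satisfies $L(P_a)\ne\es$ and has its unique string equal to an accepting computation of $M_e(x)$, and this is decidable relative to $f$ because the non-emptiness and membership problems for PDAs are decidable. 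Hence $HALT\TLE f$. For Part~2: with $n=n(e)$ and $t:=f(n)$, use the given equivalence that $ODDACC_e\cap EVENACC_e\ (=ACC_e)$ is in $L(\PDA)$ iff $e\notin INF$. Exactly as in Theorem~\ref{th:cbd}.2, when $e\notin INF$ the set $ACC_e$ is a CFL and i-boundedness forces $ACC_e\in\{L(P_1),\dots,L(P_t)\}$, while when $e\in INF$ we have $ACC_e\notin L(\PDA)$ and so $ACC_e\notin\{L(P_1),\dots,L(P_t)\}$; thus $e\in INF$ iff $(\exists x_1,\dots,x_t)(\forall 1\le i\le t)[P_i(x_i)\ne ACC_e(x_i)]$. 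Since $ACC_e$ and each $P_i$ are decidable this is a single $HALT$-query, which we answer using Part~1, giving $INF\TLE f$.

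Part~3 reuses the construction of Theorem~\ref{th:bdd}.3: on input $n$, for each pair $(P,P')$ of PDAs of size $\le n$ ask (using an $INF$ oracle) whether $(\exists\hbox{ PDA }D)(\forall x)[\,x\in L(D)\hbox{ iff }x\in L(P)\cap L(P')\,]$; when it holds, search upward for the least $i$ for which the same statement restricted to $|D|\le i$ holds; output the maximum such $i$ over all pairs. Only finitely many pairs are examined and pairs whose intersection is not a CFL simply contribute nothing, so $f$ is total; by construction it is an i-bounding function, and $f\TLE INF$. Part~4 is the standard for-infinitely-many-$n$ consequence of Part~2, obtained exactly as the corresponding io-corollaries are obtained from their predecessors (cf.\ Corollary~\ref{co:main}), so I would not argue it separately. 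The main obstacle is really just the lemma: one has to see clearly that $ODDACC$ and $EVENACC$ themselves --- not their complements --- are context-free; past that point the argument is a mechanical transcription of the c-bounding proof with $\cap$ in place of complementation.
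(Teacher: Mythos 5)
Your proposal is correct and takes essentially the same route as the paper: the paper's (sketched) proof likewise rests on the two stated facts about $ODDACC_{e,x}$ and $EVENACC_{e,x}$ together with the implicit fact--which you state and prove as your lemma--that these languages are themselves in $L(\PDA)$ uniformly in the parameters, and then transcribes the proof of Theorem~\ref{th:cbd} with intersection playing the role of complementation. Your handling of Parts 3 and 4 (reusing the construction of Theorem~\ref{th:bdd}.3 and deriving the infinitely-often statement from Part 2) is exactly what the paper intends as well.
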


\section{A Double-Exp For-Almost-All Result Via a Natural Language for (DPDA,PDA)}\label{se:bigpda}

We show that for almost all $n$ there is a (natural) language $A_n$ such that
$A_n$ has a small PDA but $\overline{A_n}$ requires a large PDA.
We then use this to show that for almost all $n$ there is a language
$A_n$ that has a small PDA but requires a large DPDA.
Neither of these results is new. Harel and Hirst~\cite{hhpda} have essentially proved
everything in this section. We include this section because some of our proofs are different
from theirs 
and because in most cases they do not explicitly state the theorems.
After every statement and proof in this section we briefly discuss what they did.
We denote their paper by HH.

\begin{lemma}\label{le:log}
Let $X,Y,Z$ be nonterminals.
Let $\Sigma$ be a finite alphabet.
\begin{enumerate}
\item
For all $n\ge 2$ there is a PDA of size $O(\log n)$ that generates $\{ Y^n \}$.
\item
For all $n\ge 2$ there is a PDA of size $O(\log n)$ that generates $\{a,b\}^n$.
\item
For all $n\ge 2$ there is a PDA of size $O(\log n)$ that generates $\{ Y^{\le n} \}$.
\item
For all $n\ge 2$ there is a PDA of size $O(\log n)$ that generates $\{a,b\}^{\le n}$.
\item
For all $n\ge 2$ there is a PDA of size $O(\log n)$ that generates $\{ Y^{\ge n} \}$.
\item
For all $n\ge 2$ there is a PDA of size $O(\log n)$ that generates $\{a,b\}^{\ge n}$.
\end{enumerate}
\end{lemma}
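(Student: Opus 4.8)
The plan is to build everything out of one basic gadget: a PDA (used as a generator) that, starting with a ``counter'' on its stack representing $n$ in binary, emits exactly $n$ copies of a fixed symbol and then empties the counter. Since $n$ written in binary has $O(\log n)$ bits, and we will only need a constant number of states plus an $O(1)$-size stack alphabet to manipulate a binary counter, the total size of every device we construct will be $O(\log n)$; the $\log n$ comes entirely from hard-coding the bits of $n$ into the finite control (equivalently, into the initial stack contents pushed by a fixed-size sequence of $\epsilon$-moves). I would first make precise, in one or two sentences, the convention that a PDA ``generates'' a language — i.e., we view it as a nondeterministic device whose accepting computations, read off the input tape it writes, enumerate $L(P)$ — so that items (1)--(6) are all statements about the same machine model used in the paper.

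First I would prove part (1). The device for $\{Y^n\}$ works in two phases. Phase one: using $O(\log n)$ $\epsilon$-transitions driven by the finite control, push onto the stack a string encoding the integer $n$ in binary, least-significant bit on top, over a stack alphabet $\{0,1,\bot\}$. Phase two: repeatedly ``decrement the counter by one and output one $Y$''; a binary decrement is implementable by a PDA because it only inspects/rewrites a bounded prefix of the stack (flip a maximal block of $0$'s at the top to $1$'s and turn the first $1$ into a $0$), using $O(1)$ states and $O(1)$ stack symbols. When the counter reaches zero (top of stack is $\bot$), halt and accept. This machine outputs $Y^n$ and nothing else, and its size is $O(\log n)$ as required. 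Part (2), generating all of $\{a,b\}^n$, is the same construction except that in phase two, each time we decrement, we nondeterministically output $a$ or $b$; the accepting computations then range over all $2^n$ strings of length exactly $n$.

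For parts (3)--(6) I would reuse the part-(1) and part-(2) machines with one extra nondeterministic choice. For $\{Y^{\le n}\}$ (part 3): run the counter-and-output loop as in part (1), but at the top of each iteration nondeterministically either continue or jump to an ``accept now'' state; this produces exactly $\{Y^m : 0 \le m \le n\}$. Part (4) is the analogous modification of part (2). For $\{Y^{\ge n}\}$ (part 5): first run the part-(1) machine to completion (outputting $Y^n$), then enter a loop that outputs additional $Y$'s and may halt after any number of them; this yields $\{Y^m : m \ge n\}$. Part (6) combines this tail-loop idea with the nondeterministic $a/b$ choice of part (2). In every case only a constant amount of new control is added, so the size stays $O(\log n)$.

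The main obstacle — and the only place that needs care rather than bookkeeping — is verifying that a PDA really can maintain and decrement a binary counter held on its stack using only $O(1)$ states and $O(1)$ stack symbols, and that loading the initial value $n$ costs only $O(\log n)$ in device size rather than $O(\log n)$ transitions out of a device of size depending on $n$ in a worse way. I would handle this by spelling out the decrement routine explicitly (pop $0$'s while pushing back a marker, then turn the first $1$ into $0$, then restore the popped block as $1$'s), and by noting that the bits of $n$ can be encoded as a fixed word pushed by a straight-line sequence of $\lceil \log_2 n\rceil$ transitions, each transition being a distinct element of the device but contributing only $1$ to the size count — so the total is $O(\log n)$. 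Everything else (the nondeterministic branches for $\le n$ and $\ge n$, and the $a/b$ choices) is a routine $O(1)$ add-on once the counter gadget is in hand.
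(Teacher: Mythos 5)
Your size accounting and the reductions of parts (3)--(6) to parts (1)--(2) are unobjectionable, but the central gadget of your construction fails, and this is a genuine gap rather than a missing detail. You claim that a binary decrement of a counter stored on the stack ``only inspects/rewrites a bounded prefix of the stack'' and can be done with $O(1)$ states and $O(1)$ stack symbols. The block of $0$'s through which a borrow propagates is not bounded --- it can be as long as the entire counter --- and a pushdown store cannot perform the rewrite $0^j1\mapsto 1^j0$ at its top with constant memory: to reach the $1$ you must pop the $j$ zeros, and after turning the $1$ into a $0$ you must push back exactly $j$ ones, but by then the value of $j$ is recorded nowhere (the finite control is constant-size, and you cannot leave markers above symbols you still need to scan past, since anything pushed becomes the new top). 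Your sketched routine ``pop $0$'s while pushing back a marker, then restore the popped block'' is therefore incoherent for a stack. In fact the uniform gadget you posit --- one fixed PDA with $O(1)$ states and $O(1)$ stack symbols that, started with the binary encoding of $n$ on its stack, emits exactly $n$ symbols for every $n$ --- cannot exist: attaching a front end that reads a binary string $w$ from the input and pushes it onto the stack would give a single PDA recognizing (a reversal variant of) $\{ w \# a^{\mathrm{val}(w)} \}$, whose intersection with the regular set $10^{*}\#a^{*}$ is $\{ 10^{k}\# a^{2^{k}} \}$, which is not context-free. So the step you yourself flagged as ``the only place that needs care'' is exactly where the proposal breaks. (A smaller quibble: in this paper the size of a PDA is its number of states plus stack symbols, not its number of transitions, though your $O(\log n)$ budget for the initialization phase survives this correction since hard-coding the bits costs $O(\log n)$ states.)

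The lemma itself is true, and the paper's proof avoids counters altogether: it builds a CFG for $\{Y^n\}$ with at most $2\lg n$ nonterminals by divide and conquer ($S\goes S'S'$ when $n$ is even; $S\goes YS''$, $S''\goes S'S'$ when $n$ is odd), gets parts (2)--(6) by adding $O(1)$ productions ($Y\goes a\mid b$, $Y\goes\epsilon$, concatenation with $Y^{\Kleenestar}$), and then applies the standard linear-size CFG-to-PDA conversion of Example~\ref{ex:bd}.5. If you want to keep the stack-counter picture, the natural repair is to let the stack alphabet contain $O(\log n)$ position-indexed symbols so the machine knows how many ones to restore after a borrow --- but at that point you have essentially re-derived the $O(\log n)$-nonterminal divide-and-conquer structure in disguise; an $O(1)$-alphabet, $O(1)$-state counter cannot be the engine of the proof.
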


\begin{proof}

\noindent
1) We show that there is a CFG of size $\le 2\lg n$ that generates
$\{Y^n\}$ by induction on $n$.

If $n=2$ then the CFG for $\{YY\}$ is 

\noindent
$S\goes YY$ 

\noindent
which has $2=2\lg 2$ nonterminals.

\vfill\eject

If $n=3$ then the CFG for $\{YYY\}$ is

\noindent
$S\goes Y_1Y$

\noindent
$Y_1\goes YY$ 

\noindent
which has $3\le 2\lg 3$ nonterminals.

Assume that for all $m<n$ there is a CFG of size $\le 2\lg m$ for $\{Y^m\}$.
We prove this for $n$.

\begin{itemize}
\item
$n$ is even. Let $G'$ be the CFG for $\{Y^{n/2}\}$ with the start symbol replaced by $S'$.
The CFG $G$ for $\{Y^n\}$ is the union of $G'$ and the one rule $S\goes S'S'$.
This CFG has 
one more 
nonterminal than $G'$. Hence the number of nonterminals 
in $G$ is $\le 2\lg(n/2) + 1 \le 2\lg n$
\item
$n$ is odd. Let $G'$ be the CFG for $\{Y^{(n-1)/2}\}$ with the start symbol replaced by $S'$.
The CFG $G$ for $\{Y^n\}$ is the union of $G'$ and the two rules $S\goes YS''$ and $S''\goes S'S'$.
This CFG has two more nonterminals than $G'$. 
Hence the number of nonterminals in $G$ is
$\le 2\lg((n-1)/2) + 2 \le 2\lg n.$
\end{itemize}

\noindent
2) Add the the productions $Y\goes a$ and $Y\goes b$ to the CFG from Part 1.

\noindent
3) Add the production $Y\goes\epsilon$ to the CFG from Part 1.

\noindent
4) Add the production $Y\goes\epsilon$ to the CFG from Part 2.

\noindent
5) Let $G$ be the $O(\log n)$ sized CFG for $\{Y^n\}$ from Part 1. 
Let $G'$ be the $O(1)$ sized CFG for $Y^\kstar$.
The CFG for the concatenation of $L(G)$ and $L(G')$ is an $O(\log n)$ sized CFG for $\{Y^{\ge n}\}$.

\noindent
6) Add the productions $Y\goes a$ and $Y\goes b$ to the CFG from Part 5.
\end{proof}

\begin{note}
Lemma~\ref{le:log}.1 follows from the first paragraph of the 
proof of Proposition 16 in the journal version of HH
(Proposition 14 in the conference version).
They used PDAs (which they call $\es$-PDAs) where as we use CFGs and then convert them to PDAs.
\end{note}

\begin{theorem}\label{th:wwbar}
For almost all $n$ there exists a (natural) language $A_n$ such that the following hold.
\begin{enumerate}
\item
$A_n,\overline{A_n}\in L(\PDA)$.
\item
Any DPDA that recognizes $\overline{A_n}$ requires size $\ge 2^{2^{\Omega(n)}}$.
\item
There is a PDA of size $O(n)$ that recognizes $A_n$.
\end{enumerate}
\end{theorem}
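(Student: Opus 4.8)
The plan is to let $\overline{A_n}$ be the finite (hence natural) language $\{ww : w\in\{a,b\}^{2^n}\}$ and to let $A_n$ be its complement inside $\{a,b\}^\kstar$. The point is the classical asymmetry of the ``$ww$'' predicate: to see that a string is \emph{not} of the form $ww$ one only needs to exhibit a single mismatched pair of positions, which a tiny PDA can guess and check with its stack, whereas any device recognizing $\{ww:|w|=N\}$ must in effect carry all $N$ bits of $w$ across the seam between the two copies, and taking $N=2^n$ turns ``$N$ bits'' into a doubly-exponential-in-$n$ quantity.

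First I would dispose of clauses (1) and (3). Being finite, $\overline{A_n}$ is regular and so lies in $L(\PDA)$. For $A_n$, note that $A_n=\{a,b\}^{\ne 2^{n+1}}\cup C$, where $C$ is the set of all $x\in\{a,b\}^\kstar$ that are not of the form $ww$: a length-$2^{n+1}$ string lies in $A_n$ iff it is not a square, a string of any other length lies in $A_n$ automatically, and every non-square of any length lies in $C$. This displays $A_n$ as a union of a regular language with a context-free language, giving clause (1). For clause (3): a string is a non-square iff it has odd length or can be written $\alpha c_1\beta c_2\gamma$ with $c_1,c_2\in\{a,b\}$, $c_1\ne c_2$ and $|\beta|=|\alpha|+|\gamma|$, and an $O(1)$-size PDA recognizes this by guessing the factorization and using the stack to verify $|\beta|=|\alpha|+|\gamma|$; moreover, by parts (4) and (6) of Lemma~\ref{le:log}, each of $\{a,b\}^{<2^{n+1}}$ and $\{a,b\}^{>2^{n+1}}$ has a PDA of size $O(\log 2^{n+1})=O(n)$. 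Forming the union of these three gadgets yields a PDA of size $O(n)$ for $A_n$.

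The heart of the matter, and the step I expect to fight hardest with, is clause (2): every DPDA — indeed every PDA — recognizing $\overline{A_n}=\{ww:|w|=2^n\}$ has size $\ge 2^{2^{\Omega(n)}}$. Since a PDA of size $s$ converts to a CFG with $O(s^{O(1)})$ nonterminals (Example~\ref{ex:bd}, part 4), it suffices to show that any CFG for $L_N:=\{ww:w\in\{a,b\}^N\}$ needs $2^{\Omega(N)}$ nonterminals, and then to set $N=2^n$. I would argue by a grafting (interchange) argument: put the grammar in Chomsky normal form with $k$ nonterminals; since $L_N$ is finite the grammar is non-recursive, so every parse tree has height $\le k$. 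For each $w$ fix a parse tree $T_w$ of $ww$ and look at the path from the root down to the seam (the boundary between the two copies); the subtrees hanging off the left of this path whose yields lie in the first copy partition $[1,N]$ from left to right, there are at most $k$ of them, and they carry some sequence of nonterminal labels. If two distinct $w,w'$ produced the same label sequence, then grafting the corresponding subtrees of $T_{w'}$ into $T_w$ all at once would give a parse tree whose yield is $w'w$ — impossible, since $w'w\notin L_N$ while the grafted tree is legal. Hence $w\mapsto(\text{label sequence})$ is injective on the $2^N$ words, which already forces $k$ to be substantial; the delicate part is to push the count up to $2^{\Omega(N)}$ rather than a merely polynomial-in-$N$ bound, which I would attempt by iterating the grafting so that each round pins down a further block of $w$ and exposes fresh nonterminal data to count. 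This quantitative refinement is where the real work lies, and it is essentially the argument of Harel and Hirst~\cite{hhpda}.

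Combining clauses (1), (2) and (3) gives Theorem~\ref{th:wwbar}. I would also record the consequence used to derive the companion ``small PDA, huge DPDA'' statement: $A_n$ is co-finite, hence a DCFL, so $A_n\in L(\DPDA)$, and since $L(\DPDA)$ is effectively closed under complementation with only linear blow-up (Example~\ref{ex:cbd}, part 2), a DPDA of size $d$ for $A_n$ would yield a DPDA, hence a PDA, of size $O(d)$ for $\overline{A_n}$, forcing $d\ge 2^{2^{\Omega(n)}}$.
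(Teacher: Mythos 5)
Your construction is essentially the paper's: the paper also takes $\overline{A_n}=\{ww \st |w|=N\}$ (with $N$ rescaled so that the bound becomes doubly exponential in $n$), builds the small PDA for $A_n$ as a union of two length gadgets from Lemma~\ref{le:log} together with a small grammar for strings containing a mismatched pair of positions, and reduces the lower bound on $\overline{A_n}$ to a $2^{\Omega(N)}$ lower bound on CFG size for $\{ww \st |w|=N\}$, which it simply cites from Filmus~\cite{lbcfg} and transfers to PDAs via Example~\ref{ex:bd}.4. Your clauses (1) and (3) are fine; using the generic non-square language (an $O(1)$-size PDA) plus the two length gadgets is a slightly different but equally valid decomposition of $A_n$, and your closing remark recovering the DPDA statement via Example~\ref{ex:cbd}.2 matches the paper's Theorem~\ref{th:wwpda}. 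The one place you fall short is exactly the step you flag: the grafting argument as written gives an injection from the $2^N$ words $w$ into nonterminal label sequences of length at most $k$ over a $k$-letter alphabet, hence only $k^{O(k)}\ge 2^N$, i.e.\ $k=\Omega(N/\log N)$; with $N=2^n$ this is a single exponential $2^{\Omega(n)}$, not the claimed $2^{2^{\Omega(n)}}$. The ``iterated grafting'' refinement is left unspecified, and that refinement is precisely the hard content of the lower bound. The clean way to close the argument is to do what the paper does and invoke the known $2^{\Omega(N)}$ CFG lower bound for $\{ww \st |w|=N\}$ (Filmus~\cite{lbcfg}, or Harel--Hirst~\cite{hhpda}) as a black box; with that citation in place of your sketch, the rest of your proof goes through exactly as in the paper.
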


\begin{proof}
We show there is a language $A_n$ such that
(1) $A_n,\overline{A_n}\in L(\PDA)$,
(2) any PDA that recognizes $\overline{A_n}$ requires size $\ge 2^{\Omega(n)}$, 
(3) there is a PDA of size $O(\log n)$ that recognizes $A_n$.
Rescaling this result yields the theorem. 

Let $W_n = {\{ ww \st |w|=n\}}$.
Let $A_n = \overline{W_n}$.

\noindent
1) $A_n$ is cofinite, so both $A_n$ and $\overline{A_n}$ are in $L(\PDA)$.

\noindent
2) Filmus~\cite{lbcfg} showed that
any CFG for $W_n$ requires size $\ge 2^{\Omega(n)}$.  
Hence by Example~\ref{ex:bd}.4 any PDA for $W_n=\overline{A_n}$
requires size $\ge 2^{\Omega(n)}$.

\bigskip

\noindent
3) We present a CFG for $A_n$ of size $O(\log n)$. 
By Example~\ref{ex:bd}.5 this suffices to obtain a PDA of size $O(\log n)$.
We will freely use that Lemma~\ref{le:log} yields CFG's of size $O(\log n)$ by Example~\ref{ex:bd}.4.

Note that if $x\in A_n$ then either $|x|\le 2n-1$, $|x|\ge 2n+1$, 
or there are two letters in $x$
that are different and are exactly $n-1$ apart.
These sets are not disjoint.

The CFG is the union of three CFGs. 
The first one generates all strings of length $\le 2n-1$.
By Lemma~\ref{le:log} there is such a CFG of size $O(\log n)$.
The second one generates all strings of length $\ge 2n+1$.
By Lemma~\ref{le:log} there is such a CFG of size $O(\log n)$.

The third one generates all strings of length $\ge 2n$ where there are two letters that are
different and exactly $n-1$ apart (some of these strings are also generated by the second
CFG).
By Lemma~\ref{le:log} there is a CFG
$G'$ of size $O(\log n)$ that  generates all strings of length $n-1$.
Let $S'$ be its start symbol.
$G'$ will be part of our CFG $G$.

Our CFG has start symbol $S$, all of the rules in $G'$, and the following:

$S\goes UaS'bU \quad | \quad UbS'aU$

$U\goes aU \quad | \quad bU \quad | \quad \epsilon$

The union of the three CFG's clearly yields a CFG of size $O(\log n)$ for $A_n$.
\end{proof}

\begin{note}
Theorem~\ref{th:wwbar} is implicit in Proposition 27 of HH. They use
the language 
$$\Sigma^\kstar - \{w\$w\$w\$w \st |w|=n \}.$$
\end{note}

We can now obtain a double exponential result about (DPDA,PDA).

\begin{theorem}\label{th:wwpda}
For almost all $n$ there exists a (natural) language $A_n$ such that  the following hold.
\begin{enumerate}
\item
Any DPDA that recognizes $A_n$ requires size $\ge 2^{2^{\Omega(n)}}$.
\item
There is a PDA of size $O(n)$ that recognizes $A_n$.
\end{enumerate}
\end{theorem}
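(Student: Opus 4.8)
The plan is to use the very same language as in Theorem~\ref{th:wwbar} and to exploit the one structural fact that distinguishes DPDAs from PDAs: DPDAs are effectively closed under complementation with only a \emph{linear} size blowup (Example~\ref{ex:cbd}.2), whereas PDAs have no such closure. Set $W_n=\{ww \st |w|=n\}$ and $A_n=\overline{W_n}$, exactly as in the proof of Theorem~\ref{th:wwbar}. Since $A_n$ is cofinite we have $A_n,\overline{A_n}\in L(\DPDA)\subseteq L(\PDA)$, and the $O(\log n)$-size CFG for $A_n$ constructed in that proof, together with Example~\ref{ex:bd}.5, already supplies a PDA for $A_n$ of size $O(\log n)$. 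So the small-PDA half is in hand, with room to spare.

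For the DPDA lower bound I would argue by contradiction. Suppose some DPDA of size $s$ recognizes $A_n$. Since $\overline{A_n}=W_n$ is itself in $L(\DPDA)$, Example~\ref{ex:cbd}.2 produces a DPDA --- hence in particular a PDA --- for $W_n$ of size $O(s)$. But, as already noted in the proof of Theorem~\ref{th:wwbar} (via Filmus~\cite{lbcfg} together with Example~\ref{ex:bd}.4), every PDA for $W_n$ has size $\ge 2^{\Omega(n)}$. Hence $O(s)\ge 2^{\Omega(n)}$, so $s\ge 2^{\Omega(n)}$; that is, every DPDA for $A_n$ has size $\ge 2^{\Omega(n)}$.

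It then remains to rescale. Put $B_n=A_{2^n}$; this is defined for every $n\ge 1$, and being the complement of a natural language it is natural. Then $B_n$ has a PDA of size $O(\log 2^n)=O(n)$, while every DPDA for $B_n$ has size $\ge 2^{\Omega(2^n)}=2^{2^{\Omega(n)}}$. This is exactly the asserted statement for all $n\ge 1$, hence for almost all $n$.

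I do not anticipate a genuine obstacle: all the technical content lives in Theorem~\ref{th:wwbar} (Filmus's exponential CFG lower bound for $W_n$ and the logarithmic-size CFG for its complement), and the only new ingredient is the observation that a small DPDA for $\overline{W_n}$ would, via cheap DPDA complementation, yield a small PDA for $W_n$, which Filmus's bound forbids. The one step requiring a little care is the rescaling: one should check that substituting $n\mapsto 2^n$ converts the ``for all $n\ge 2$'' formulation into a ``for almost all $n$'' statement with the inner and outer exponentials landing in the right places, and that $B_n$ still counts as natural.
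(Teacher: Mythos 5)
Your proposal is correct and is essentially the paper's own argument: the paper proves Theorem~\ref{th:wwpda} by taking the same $A_n=\overline{\{ww \st |w|=n\}}$ from Theorem~\ref{th:wwbar}, noting a DPDA $P$ for $A_n$ yields (via Example~\ref{ex:cbd}.2) a DPDA, hence PDA, for $\overline{A_n}$ of size $O(|P|)$, and invoking the Filmus-based PDA lower bound for $\overline{A_n}$, with the same rescaling. The only difference is presentational: you unroll the unscaled bound and rescale at the end, while the paper cites the already-scaled Theorem~\ref{th:wwbar}.
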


\begin{proof}
Let $A_n$ be as in Theorem~\ref{th:wwbar} (note that it is scaled).
We already have that $A_n$ has a PDA of size $O(n)$.
We show that any DPDA for $A_n$ is of size $\ge 2^{2^{\Omega(n)}}$.
Let $P$ be an DPDA for $A_n$. 
By Example~\ref{ex:cbd}.2 there is a DPDA $P'$ for $\overline{A_n}$ of size
$O(|P|)$.  By Theorem~\ref{th:wwbar} $|P'|\ge 2^{2^{\Omega(n))}}$,
hence $|P|\ge 2^{2^{\Omega(n)}}$,
\end{proof}

\begin{note}
Theorem~\ref{th:wwpda} is a special case of Corollary 30 of HH.
\end{note}


\section{A Double-Exp For-Almost-All Result Via a Natural Language for (PDA,LBA)}\label{se:bigishlba}

We show that for almost all $n$ there is a (natural) language
$A_n$ that has a small LBA but requires a large PDA.

\begin{theorem}\label{th:wwlba}
For almost all $n$ there exists a (natural) language $A_n$ such that  the following hold.
\begin{enumerate}
\item
Any PDA that recognizes $A_n$ requires size $\ge 2^{2^{\Omega(n)}}$.
\item
There is an LBA of size $O(n)$ that recognizes $A_n$.
\end{enumerate}
\end{theorem}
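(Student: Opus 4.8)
The plan is to exhibit a natural language $A_n$ of the form $\{ww : |w| = n\}$ (or a close variant), which has an obvious small LBA but, by Filmus's lower bound, requires a large context-free grammar — hence a large PDA. First I would take $A_n = W_n = \{ww : |w| = n\}$ over the alphabet $\{a,b\}$. This is about as natural a language as one could want, and after rescaling ($n \mapsto \Theta(n)$) the double-exponential bound in the theorem statement reduces to showing: any PDA for $W_n$ has size $\geq 2^{\Omega(n)}$, while there is an LBA for $W_n$ of size $O(\log n)$.

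For the lower bound I would invoke Filmus~\cite{lbcfg}, already cited in the proof of Theorem~\ref{th:wwbar}, which says any CFG for $W_n$ has $\geq 2^{\Omega(n)}$ nonterminals; combined with Example~\ref{ex:bd}.4 (the map from PDA to CFG is polynomial) this gives that any PDA for $W_n$ has size $\geq 2^{\Omega(n)}$. After the rescaling used in Theorem~\ref{th:wwbar}, this becomes the desired $2^{2^{\Omega(n)}}$ lower bound against PDAs. For the upper bound, I would describe a linear-bounded (deterministic or nondeterministic, doesn't matter) Turing machine that, on input $x$, first checks $|x| = 2n$ — this requires only $O(\log n)$ states by using a binary counter written on a work track, which is where the logarithmic-in-$n$ (hence, after rescaling, $O(n)$) size comes from — and then, using the same counter, compares position $i$ with position $i+n$ for each $i \leq n$. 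All of this fits in linear space and uses only $O(\log n)$ states and a fixed alphabet, so the LBA has size $O(\log n)$; rescaling gives size $O(n)$.

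The main obstacle I anticipate is purely bookkeeping: making sure the LBA genuinely has size $O(\log n)$ rather than $O(n)$ before rescaling, i.e., that the counter is stored on the tape (contributing to space, which is allowed to be linear) rather than baked into the finite control. The cleanest way to guarantee this is to write $n$ in binary on a dedicated track of the tape as part of the machine's initialization — but since the tape content can only depend on the input, one instead hard-codes $n$ via $O(\log n)$ states that lay down the binary representation of $n$ on a work track at the start of the computation. This is exactly analogous to the $O(\log n)$-size CFG constructions in Lemma~\ref{le:log}, where the "divide by two" recursion produces $O(\log n)$ nonterminals; here the recursion is replaced by a binary counter. I would then remark that this theorem is weaker than what one gets for (PDA,LBA) in Section~\ref{se:biglba} in terms of the size gap, but it has the advantage of using the natural language $W_n$, and I would add a note comparing with HH if they proved something analogous.
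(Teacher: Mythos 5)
Your proposal is correct, and the lower-bound half is exactly the paper's: cite Filmus for a $2^{\Omega(n)}$ CFG lower bound and push it through Example~\ref{ex:bd}.4 to PDAs, then rescale $n\mapsto 2^{\Theta(n)}$. Where you genuinely diverge is in the language and in the upper bound. The paper takes $A_n=\{w\$w \st |w|=n\}$ and stays entirely grammar-based: it builds a context-sensitive grammar of size $O(\log n)$ (the $O(\log n)$-size CFG for $Y^n$ from Lemma~\ref{le:log}, plus a constant number of context-sensitive productions that march $A/B$ markers rightward and finally turn the endmarker into \$), and then converts the CSG to an LBA via Example~\ref{ex:bd}. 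You instead take $A_n=\{ww \st |w|=n\}$ (the same language as in Theorem~\ref{th:wwbar}, so Filmus applies verbatim, and it is explicitly one of the paper's examples of a natural language) and build the LBA directly: hard-code $n$ in binary using $O(\log n)$ states that write it onto a work track, then use the counter to check $|x|=2n$ and to compare positions $i$ and $i+n$ with $O(1)$ further states and a constant work alphabet. Both routes are sound and give the same bounds. Your machine-level route buys independence from the \$ separator (which the paper's CSG uses to make the copying productions clean) and a very transparent accounting that the device size is $O(\log n)$ before rescaling --- the one bookkeeping point you rightly flag, namely that $n$ lives in the finite control only as $O(\log n)$ hard-coded bits while all counting happens on the tape; the paper's route buys uniformity with the rest of Section~\ref{se:bigpda} (everything is a grammar, and the LBA is obtained by citing a bounding function rather than by programming a Turing machine).
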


\begin{proof}
We show there is a language $A_n$ such that
(1) any PDA for $A_n$ requires size $\ge 2^{\Omega(n)}$ and
(2) there is an LBA of size $O(\log n)$ for $A_n$.
Rescaling this result yields the theorem. Let

$$A_n = \{ w\$w \st |w|=n\}.$$

\noindent
1) Filmus~\cite{lbcfg} showed that
any CFG for $A_n$ requires size $\ge 2^{\Omega(n)}$.  
Hence, by Example~\ref{ex:bd}.4, any PDA for $A_n$
requires size $\ge 2^{\Omega(n)}$.

\bigskip

\noindent
2) We present a CSG for $W_n$ of size $O(\log n)$.
By Example~\ref{ex:bd}.7 this yields an LBA of size $O(\log n)$.

Here is the CSG for $\{ w\$w \st |w|=n\}$. 

$S\goes Y^nW$ (actually use the CFG from Lemma~\ref{le:log} of size $O(\log n)$ to achieve this)

$Y\goes aA \quad | \quad bB$

$Aa \goes aA$

$Ab \goes bA$

$Ba \goes aB$

$Bb \goes bB$

$AW \goes Wa$

$BW \goes Wb$

$W \goes \$ $
\end{proof}

\section{A Ginormous For-Almost-All Result for (PDA,LBA)}\label{se:biglba}

Meyer and Fisher~\cite{ecodesc} say the following in their {\bf Further Results} Section:

\centerline{{\it $\ldots$ context-sensitive grammars may be arbitrarily more succinct than context-free grammars $\ldots$ }}

The reference given was a paper of Meyer~\cite{meyerpl}. 
That paper only refers to Turing Machines.
We exchanged emails with Meyer about this and he informed us that
his techniques could be used to obtain the result that is
Theorem~\ref{th:cslvvsmall} below.
Rather than work through his proof we provide our own.
Our proof is likely similar to his; however, we use the
closure of $L(\LBA)$ under complementation~\cite{Immerman,Sz}
which was not available
to him at the time.

We assume that all LBAs are modified so that, on input $x$, if a branch does not 
terminate in time $2^{|x|^2}$ then that branch will halt and reject. Hence
every branch either halts and accepts or halts and rejects.

Let $P_1,P_2,\ldots$ be a size-enumeration of all PDAs.
We assume that $P_e$ is of size $\ge e$.
We also have a list $N_1,N_2,\ldots,$ of LBAs such
that $L(N_i)=L(P_i)$ and 
(by the effective closure of $L(\LBA)$ under complementation)
$N_1',N_2',\ldots$
such that $L(N_i') = \overline{L(P_i)}$.
Note that $N_1,N_2,\cdots$ {\it is not} a list of all LBAs.

The following will be key later:
Let $x\in\Sigma^\kstar$ and imagine
running $N_i(x)$, for each path
noting if it said Y or N, and then running
$N_i'(x)$, and then noting if that path
said Y or N. So each path ends up
with a NN, NY, YN, or YY.
\begin{itemize}
\item
$x\in L(P_i)$:
some path says YN,
some paths might say NN,
but no path says NY or YY.
\item
$x\notin L(P_i)$:
some path says NY,
some path might says NN,
but no path says YN or NN.
\end{itemize}

Note that $N_i$ and $N_i'$ run in $O(|x|)$ space.

\begin{theorem}\label{th:cslvvsmall}
Let $f\TLE HALT$.
For almost all $n$ there exists a finite set 
$A_n$ such that the following hold.
\begin{enumerate}
\item
Any PDA that recognizes $A_n$ requires size $\ge f(n)$.
\item
There is an LBA of size $O(n)$ that recognizes $A_n$.
\end{enumerate}
\end{theorem}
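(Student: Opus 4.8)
The plan is to build $A_n$ as a finite set that "diagonalizes" against all small PDAs, where "small" is measured by $f(n)$, and to use $f\TLE HALT$ (equivalently, $f(n)=\lim_s g(n,s)$ for a computable $g$, by Fact~\ref{fa:turing}.6) to make the construction effective enough for an LBA of size $O(n)$. Concretely, fix an input length parameter and let $A_n$ consist of a single string $z_n$ of length roughly $n$ (or a short block of such strings) chosen so that $z_n\in A_n$ iff "the first PDA of size $\le f(n)$ that is a candidate rejects $z_n$," or more robustly so that for every PDA $P_e$ with $|P_e|\le f(n)$ there is some witness string on which $P_e$ disagrees with $A_n$. Since $A_n$ is finite and depends only on finitely much data (the behavior of finitely many PDAs on finitely many short strings, plus the value $f(n)$), it is in principle computable given $f$; the work is to see that the resulting $A_n$ has a small LBA.

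First I would pin down the LBA. On input $x$ of length $m$, the LBA hardwires $n$ (this costs $O(\log n)$ symbols/states, as in Lemma~\ref{le:log}, so after rescaling $O(n)$ is comfortable) and must decide $x\in A_n$. Because $f\TLE HALT$, I use the limit approximation $g(n,\cdot)$: the LBA, having $O(|x|)$ space and (by the standing assumption) a $2^{|x|^2}$ time budget that can be simulated with a binary counter in $O(|x|^2)$ cells, runs $g(n,s)$ for increasing $s$ up to the time bound, takes the last value $t$ it sees as its guess for $f(n)$, then enumerates all PDAs $P_e$ with $|P_e|\le t$ and simulates each of them on the relevant short witness strings — crucially using the $N_i,N_i'$ pair trick described just before the theorem, so that a nondeterministic linear-space machine can correctly determine membership $x\in L(P_e)$ by guessing the right path and checking the YN/NY pattern. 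From these finitely many membership bits the LBA reconstructs $A_n$'s defining condition and accepts or rejects $x$. All of this fits in space $O(|x|)+O(\log n)$ once $|x|$ is comparable to $n$; for $|x|$ far from $n$ the LBA rejects immediately. So part~2 holds.

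For part~1, suppose $P$ is a PDA with $|P|<f(n)$; then $P=P_e$ for some $e$ with $|P_e|\le |P| < f(n)$ appearing in our enumeration, and by construction $A_n$ was defined to disagree with $L(P_e)$ on one of its designated witness strings, so $L(P)\ne A_n$. Hence any PDA for $A_n$ has size $\ge f(n)$. The main obstacle I expect is the circularity/effectivity issue: the LBA must compute (a guess at) $f(n)$, but $f$ is only $HALT$-computable, so the LBA cannot simply "call $f$." The resolution is exactly the Shoenfield Limit Lemma packaging in Fact~\ref{fa:turing}.6 — replace $f(n)$ by $\lim_s g(n,s)$ and let the space-bounded, time-bounded LBA take the best approximation it can afford; one must check that for all but finitely many $n$ the time budget $2^{|x|^2}$ at the "correct" input lengths is enough for $g(n,s)$ to have converged, which holds because convergence of $g(n,\cdot)$ happens at some finite stage $s_0(n)$ and we may, if necessary, rescale so that the relevant inputs are long enough to out-run $s_0(n)$ (this is the source of the "for almost all $n$" qualifier). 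A secondary point to handle carefully is that $g$ can be taken of very low complexity (again Fact~\ref{fa:turing}.6), so simulating it inside the LBA is cheap; and that the $N_i/N_i'$ construction genuinely lets a linear-space nondeterministic machine decide PDA membership, which is where closure of $L(\LBA)$ under complementation (\cite{Immerman,Sz}) is used, as the authors flag in the paragraph preceding the theorem.
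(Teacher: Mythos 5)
Your high-level strategy (diagonalize against all PDAs of size at most $f(n)$, replace $f$ by a computable limit approximation $g(n,s)$ via Fact~\ref{fa:turing}.6, and use the $N_i,N_i'$ pairs so a nondeterministic linear-space machine can decide PDA membership) is the same as the paper's, but your concrete realization has a fatal gap: you confine $A_n$ to strings of length roughly $n$ (``a single string $z_n$ of length roughly $n$ (or a short block of such strings)'', and ``for $|x|$ far from $n$ the LBA rejects immediately''). Any finite set of strings each of length $O(n)$ over a fixed alphabet has a PDA (indeed a DFA) of size bounded by a computable function of $n$ (a single string of length $m$ is accepted by a PDA with $O(m)$ states), whereas $f\TLE HALT$ may dominate every computable function; so no such $A_n$ can satisfy part~1. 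Relatedly, your proposed repair for the approximation issue --- have the LBA run $g(n,s)$ up to its time budget and ``rescale so that the relevant inputs are long enough to out-run $s_0(n)$'' --- cannot work: the stage $s_0(n)$ at which $g(n,\cdot)$ stabilizes is not bounded by any computable function of $n$, so no computable choice of witness lengths or time budget guarantees the guess $t$ equals $f(n)$, and whenever $t<f(n)$ the PDAs of size between $t$ and $f(n)$ are never diagonalized against. This is also not what the ``for almost all $n$'' is for.

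The paper's construction avoids both problems by imposing no length cap at all: $A_n\subseteq a^*$, and on input $a^s$ the LBA (which hardwires only $n$) uses the input length itself as the approximation stage, computing $t=g(n,s)$. It deterministically re-simulates itself on the shorter inputs $\epsilon,a,\ldots,a^{\lg^* s}$ to see which requirements $R_i: A_n\ne L(P_i)$ with $i\le t$ are already satisfied; if all are, it rejects, and otherwise it attacks the least unsatisfied $R_i$ at the current length using $N_i,N_i'$ and answers opposite to $P_i$. Diagonalization against $P_i$ thus happens at whatever (possibly enormous, non-computably large) length is needed, which is exactly what makes part~1 go through for arbitrary $f\TLE HALT$; finiteness of $A_n$ is then a consequence --- once $s$ passes the stabilization point and all $f(n)$ requirements are met, every longer input is rejected --- rather than an a priori restriction. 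To fix your write-up you would need to adopt this unbounded-length, stage-equals-input-length scheme (including some bookkeeping, such as the self-simulation, to know which requirements are already satisfied), not merely tune the lengths or the time budget.
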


\begin{proof}
We construct the language $A_n$ by describing an LBA for it (really an $\NSPACE(|x|)$ algorithm).
The idea is that $A_n$ will
be diagonalized against all small PDAs.
The algorithm will run in $O(|x|)$ space. We will comment on the constant in the
$O(|x|)$ later.

Since $f\TLE HALT$, by Fact~\ref{fa:turing}.6,  
there exists a computable $g$ such that
$(\forall n)[f(n)=\lim_{s\goes\infinity} g(n,s)]$.
We can assume $g(n,s)$ can be computed in space $O(\log(n+s))$.

Fix $n$. We describe the algorithm for $A_n$. The set we construct
will satisfy the following requirements:

For $1\le i\le f(n)$ (which we do not know)

$R_i: A_n \ne L(P_i)$.

This is only a finite number of requirements; however,
we do not know $f(n)$. We will get around this by approximating
$f(n$) via $g(n,s)$.

The set $A_n$ will be a subset of $a^\kstar$.

\noindent
{\bf ALGORITHM for $A_n$}
\begin{enumerate}
\item
Input($a^s$).
\item
Compute $t=g(n,s)$.
\item
Deterministically simulate 
$A_n$ on the strings $\{\epsilon, a, a^2, \ldots, a^{\lg^* s}\}$.
Do not store what the results are; however, store which requirements
indexed $\le t$ are satisfied. If so many were satisfied that you can't store them
in space $\le \log s$
then reject and halt.
\item
If all of the requirements $P_i$ as $1\le i\le t$ are satisfied
then reject and halt.
\item
(Otherwise)
Let $i$ be the least elements of $\{1,\ldots,t\}$ such
that $R_i$ has not been seen to be satisfied. 
Run (nondeterministically) $N_i(x)$ and then $N_i'(x)$.
Any path that yields NN outputs NO.
There will be no paths that yields YY.
Any path that yields NY output YES (this is diagonalization--- a
NY means that $x\notin L(P_i)$).
Any path that yields YN output NO (this is diagonalization--- a
YN means that $x\in L(P_i)$).
Requirements $R_i$ is satisfied.
\end{enumerate}
{\bf END OF ALGORITHM for $A_n$}

By the definition of $g$ there exists 
$s_0$ such that, for all $s,s'\ge s_0$, $g(n,s)=g(n,s_0)=f(n)$.

We show, by induction on $i$,  that for all $i\le f(n)$, $R_i$ is satisfied.
Assume that for all $i'<i\le f(n)$, $R_{i'}$ is satisfied.
Let $s_1>s_0$ be the least $s$ such that all $R_{i'}$ 
with $i'<i$ are satisfied after $A_n(a^s)$ is determined.
Let $s_2>s_1$ be the least $s$ such that for all inputs $a^{\ge s}$ 
the algorithm 
deterministically simulates $A_{s_1}$ and hence 
notices that, for all $i'<i$, $R_{i'}$ is satisfied.
If $R_i$ is satisfied on some input in $a^{\le s_2}$ then we are done.
Otherwise note that on input $a^{s_2}$ the algorithm will
notice that $R_i$ is not satisfied and satisfy it.

How big is the LBA for $A_n$?
The LBA only needs the parameter $n$ and a constant number of
instructions. Hence their is an LBA of size $O(\log n)$; however,
we only need that there is an LBA of size $O(n)$.

We show that the algorithm for $A_n$ is in $\NSPACE(O(|x|)$.
Let $g_{\max} = \max\{g(n,s) \st s\in \nat \}$. 
Since $\lim_{s\goes\infinity} g(n,s)$ exists 
$g_{\max}$ is well defined.
It depends on $n$ but not on the input; 
hence $g_{\max}$ is a constant.
The first four steps of the algorithm take 
$\le \lg^*(|x|)+g_{\max}$ space to execute.
For large $|x|$ this is far less than $|x|$.
Step 5 is the only nondeterministic step. Each branch is the result of running 
a branch of the $\NSPACE(O(|x|)$ machines 
$N_i(x)$ and $N_i'(x)$ where $1\le i \le g_{\max}$.
Hence there is a
constant $c$ such that for all $x$ each branch of the computation takes $\le c|x|$ 
space. Therefore the algorithm for $A_n$ is in $\NSPACE(O(|x|)$.
\end{proof}

\section{Open Problems}

We have pinned down the exact Turing degree of the bounding function
for (DPDA,PDA) and (PDA,LBA). The exact Turing degree for the bounding
functions for (DPDA,UCFG) and (UCFG,PDA) are open.

We have obtained natural languages that show the (1) bounding function
for (DPDA,PDA) and (PDA,LBA), and (2) the c-bounding function for PDAs,
are at least double exponential. It is open to
find natural languages that show a larger lower bounds.

\section{Acknowledgment}

We thank Albert Meyer for conversations that helped clarify the history of 
Theorem~\ref{th:cslvvsmall}.
We thank 
Hermann Gruber, 
Albert Meyer,
Leslie Valiant, 
and the referees for
some references we had overlooked.
We thank Yuval Filmus, Karthik Gopalan, Rebecca Kruskal, Sam Zbarsky, 
and the referees for corrections and proofreading.
We thank Yuval Filmus and one of the referees for pointing out an error
in the original proof of Theorem~\ref{th:wwlba} and supplying us with a way to
fix the proof.
We thank Keith Ellul, Bryan Krawetz, Jeffrey Shallit, and Ming-wei Wang
whose paper~\cite{regexp} inspired this paper; and we thank
Jeffrey Shallit who brought this paper to our attention.


\end{document}